\newcommand{\R}{\ensuremath{\mathbb{R}}}
\newcommand{\C}{\ensuremath{\mathbb{C}}}
\newcommand{\N}{\ensuremath{\mathbb{N}}}
\newcommand{\dif}{{\operatorname{d}}}
\newcommand{\tr}{{\operatorname{tr}}}
\newtheorem{prop}{Proposition}
\newtheorem{cor}[prop]{Corollary}
\newcommand{\mattwo}[4]{\left(
	\begin{array}{cc}
		#1 & #2 \\
		#3 & #4 \\  
	\end{array}
	\right)
}
\newcommand{\vectwo}[2]{\left(
	\begin{array}{c}
		#1\\
		#2\\  
	\end{array}
	\right)
}
\numberwithin{equation}{section}
\numberwithin{figure}{section}
\numberwithin{table}{section}
\definecolor{changes}{RGB}{0, 0, 0}
\begin{document}

\title{An integrodifference model for vegetation patterns in semi-arid environments with seasonality}

%\subtitle{Do you have a subtitle?\\ If so, write it here}

\author{L. Eigentler and J. A. Sherratt}

\newcommand{\Addresses}{{% additional braces for segregating \footnotesize
		\bigskip
		\footnotesize
		
		L. Eigentler (corresponding author),  Maxwell Institute for Mathematical Sciences, Department of Mathematics, Heriot-Watt University, Edinburgh EH14 4AS, United Kingdom\par\nopagebreak
		\textit{E-mail address}: \texttt{le8@hw.ac.uk}
		
		\medskip
		
		J.A. Sherratt,  Maxwell Institute for Mathematical Sciences, Department of Mathematics, Heriot-Watt University, Edinburgh EH14 4AS, United Kingdom\par\nopagebreak
		\textit{E-mail address}: \texttt{J.A.Sherratt@hw.ac.uk}}}
\date{}

\maketitle
\Addresses

\begin{abstract}
	Vegetation patterns are a characteristic feature of semi-deserts occurring on all continents except Antarctica. In some semi-arid regions, the climate is characterised by seasonality, which yields a synchronisation of seed dispersal with the dry season or the beginning of the wet season. We reformulate the Klausmeier model, a reaction-advection-diffusion system that describes the plant-water dynamics in semi-arid environments, as an integrodifference model to account for the temporal separation of plant growth processes during the wet season and seed dispersal processes during the dry season. The model further accounts for nonlocal processes involved in the dispersal of seeds.
	Our analysis focusses on the onset of spatial patterns. The Klausmeier partial differential equations (PDE) model is {\color{changes}is linked to the integrodifference model in an appropriate limit}, which yields a control parameter for the temporal separation of seed dispersal events. We find that the conditions for pattern onset in the integrodifference model are equivalent to those for the continuous PDE model and hence independent of the time between seed dispersal events. We thus conclude that in the context of seed dispersal, a PDE model provides a sufficiently accurate description, even if the environment is seasonal. This emphasises the validity of results that have previously been obtained for the PDE model. 
	Further, we numerically investigate the effects of changes to seed dispersal behaviour on the onset of patterns. We find that long-range seed dispersal inhibits the formation of spatial patterns and that the seed dispersal kernel's decay at infinity is a significant regulator of patterning.
\end{abstract}

%----------------------------------------------------------------------------------
%----------- INTRODUCTION ---------------------------------------------------------
%----------------------------------------------------------------------------------

\section{Introduction}

Vegetation patterns are a ubiquitous feature of ecosystems in semi-arid climate zones. Occurrences of such mosaics of plants and bare soil have been reported from all continents except Antarctica, including the African Sahel \cite{Deblauwe2012} and the Horn of Africa \cite{Gowda2018}, Western Australia \cite{Gandhi2018}, northern Chile \cite{Fernandez-Oto2019a}, Israel \cite{Sheffer2013}, the Chihuahuan Desert in North America \cite{Deblauwe2012} and Southeastern Spain \cite{Lesschen2008}. A detailed understanding of the evolution of vegetation patterns is of considerable importance as they hold valuable information on the health of ecosystems. For example, changes to a pattern's properties such as its wavelength, its recovery time from perturbations, or the area fraction covered by biomass can act as  early warning signals of desertification \cite{Corrado2014, Dakos2011, Gowda2016, Kefi2007, Rietkerk2004, Saco2018, Zelnik2018}. Desertification processes are a major threat to economies in semi-deserts as agriculture provides a significant contribution to GDP \cite{UNGlobalLandOutlook2017}. For example, the livestock sector, which depends in part on animals grazing on spatially patterned vegetation, accounts for 20\% of GDP in Chad and involves 40\% of its population \cite{Dickovick2014, UNLivestockBrief2005}.

A number of feedback mechanisms may be involved in the pattern formation process (see \cite{Meron2018} for a review), but it is widely agreed that a central mechanism is the vegetation-infiltration feedback loop, which results in a redistribution of water towards areas of high biomass. On bare soil, the formation of physical and biological soil crusts inhibits water infiltration into the soil \cite{Eldridge2000}. Thus, water run-off towards existing vegetation patches occurs. The enhancement of environmental conditions in these sinks for the limiting resource drives further plant growth and thus closes the feedback loop \cite{Thompson2010}.

Dryland plants have developed a range of seed production and dispersal strategies to cope with the environmental stress in their habitats \cite{Ellner1981, RheedevanOudtshoorn2013}. One such mechanism, commonly observed in water-controlled ecosystems, is ombrohydrochory, the dispersal of seeds caused by an opening of the seed container due to contact with water \cite{Parolin2006, Navarro2009, RheedevanOudtshoorn2013}. One particular form, exhibited by members of the Aizoaceae family in semi-arid regions of the Sahel, Australia and South America, is ballistic dispersal, which uses the kinetic energy of raindrops to expulse the plants' seeds \cite{Parolin2006, Friedman1978}. Some semi-arid environments such as those in the Mediterranean are characterised by seasonal fluctuations in their environmental conditions and in particular in their precipitation patterns \cite{Noy-Meir1973}. In combination with processes that allow plants to store diaspores during periods of drought, ombrohydrochory yields a synchronisation of seed dispersal with the beginning of the wet season in such seasonal environments. This synchronisation has, for example, been reported in \textit{Mesembryanthemum crystallinum} and \textit{Mesembryanthemum nodiflorum} in Southeastern Spain \cite{Navarro2009}. If seed dispersal strategies different from ombrohydrochory are dominant, most species disperse their seeds during the dry season \cite{Navarro2009,Shabana2018}.

%During wet seasons, water input to dryland ecosystems occurs through a number of short lasting precipitation pulses rather than through a continuous regime. Only a small number of these discrete rain events are of a sufficiently high intensity to activate plant growth processes. If large rainfall pulses are sufficiently separated, then the plant growth period is a pulse-type event itself, as the resource availability is quickly exhausted \cite{Noy-Meir1973}. Smaller precipitation amounts do not allow soil moisture in deeper soil levels to exceed a threshold below which the system remains dormant. In this case, evaporation and drainage effects dominate and the water is lost from the system \cite{Heisler-White2008}. Precipitation intermittency has previously been considered in mathematical modelling approaches of dryland ecosystems \cite{\todo[noinline]{citations}}. Most models, however, address the discrete nature of precipitation through a system of partial differential equations with a time-dependent input of water\todo[noinline]{citations}. While such an approach is suitable for a model analysis using numerical simulations, it poses a considerable challenge for an analytical study of the model.

The seasonal synchronisation of seed dispersal splits the annual life-cycle of a plant population into two distinct stages. During the wet season, seeds germinate, new seedlings emerge and adult plants increase their biomass, but no spatial movement takes place. Seed dispersal only occurs during, or at the end of the dry season, while growth processes are dormant \cite{Baudena2008}. {\color{changes}By} contrast, most mathematical models for dryland vegetation patterns consist of partial differential equations and thus assume that seed dispersal occurs continuously in time. A widely used approach to account for the temporal structure of the annual life cycle is the use of integrodifference equations. This splits the system into 2 distinct, non-overlapping phases, which are both described as discrete, instantaneous processes: a growth phase during which dispersal processes are either not present or negligible and a dispersal phase during which no growth occurs. 
%An integrodifference system cannot only capture the separation of plant growth and seed dispersal processes, but is also able to  the occurrence of plant growth as a pulse, associated with a pulse of precipitation.  
The application of integrodifference equations to biological and ecological systems in which spatial dispersal plays a significant role was in part pioneered by Kot and Schaffer \cite{Kot1986}, and has become a well-established tool in the description of biological and ecological systems since then (e.g. \cite{Musgrave2014I, Musgrave2014II, Powell2004, Clark2003, Neubert1995}).

The spatial and temporal scales associated with the evolution of vegetation patterns do not allow their recreation in laboratory settings. Instead, a range of mathematical models have been proposed to address different aspects of the pattern dynamics \cite{Borgogno2009, Zelnik2013}. A significant amount of modelling work is based on systems of partial differential equations, most notably by Gilad et al. \cite{Gilad2004}, HilleRisLambers, Rietkerk et al. \cite{HilleRisLambers2001, Rietkerk2002} and Klausmeier \cite{Klausmeier1999}. The reaction-advection-diffusion Klausmeier model \cite{Klausmeier1999} is a deliberately basic description of dryland ecosystems based on the vegetation-infiltration feedback loop. Its relative simplicity provides a rich framework for model analyses and extensions (e.g. \cite{Bennett2019, Consolo2019, Eigentler2018nonlocalKlausmeier, Eigentler2019Multispecies, Siteur2014, Ursino2006, Sherratt2010, Sherratt2011, Sherratt2013III, Sherratt2013IV, Sherratt2013V}). The recent development of new remote sensing technology, using temporal sequences of satellite images, allows for comparisons between model predictions and field data \cite{Bastiaansen2018, Gandhi2018}.

In the Klausmeier model, seed dispersal is modelled by a diffusion term. In reality, the dispersal of seeds is affected by nonlocal processes, such as ballistic dispersal or long range dispersal (e.g. via mammals or wind) \cite{Pueyo2008, Bullock2017}. The Klausmeier model has been extended to account for such nonlocal processes \cite{Eigentler2018nonlocalKlausmeier, Bennett2019} and a similar approach has been applied to other models for dryland vegetation \cite{Baudena2013, Pueyo2008, Pueyo2010}. Integrodifference systems also provide a description of nonlocal dispersal effects through a convolution of the plant density with a kernel function. The kernel function is a probability density function describing the average distribution of seeds dispersed from a single plant. The dispersal kernel's properties (in particular its shape and standard deviation) depend on both plant species and environmental conditions \cite{Bullock2017}.

In this paper we address the significance of seed dispersal synchronisation and its temporal separation from growth processes in seasonal dryland environments. To do so, we introduce an integrodifference model describing the plant-water dynamics in semi-arid ecosystems in Section \ref{sec: models}. We base our model on the Klausmeier model, to compare our results to previous model analyses of models with no temporal structure. To aid comparisons to the PDE model, we review the most relevant results for the Klausmeier model in Section \ref{sec: models}. Even though an integrodifference model cannot explicitly take into account the length of the plant growth stage, a convergence result (Proposition \ref{prop: Difference: limit case}) yields a control parameter for the temporal separation of seed dispersal events through an appropriate parameter setting. In Section \ref{sec: integrodifference: linstab} we focus on this special case and perform a linear stability analysis to determine a condition for pattern onset in the model and investigate this condition under variations in the growth season length. The analytical derivation of this condition relies on a specific (but nevertheless biologically relevant) choice of the dispersal kernels. To relax this assumption we perform numerical simulations in Section \ref{sec:difference: simulations} to determine the parameter region in which pattern onset occurs for other biologically relevant dispersal kernels. Finally, we discuss our results in Section \ref{sec: difference: Discussion}.

\section{The Models}\label{sec: models} 
In this section we introduce the integrodifference model which we use to investigate the effects of seasonal synchronisation of seed dispersal on the onset of vegetation patterns in semi-arid environments. The model is based on the reaction-advection-diffusion model by Klausmeier \cite{Klausmeier1999} and to facilitate the comparison of our results on the discrete model to that of the time-continuous model, we start by reviewing relevant results for the Klausmeier model. We relate the models through a convergence result that shows that {\color{changes}solutions of the integrodifference model converge to solutions of the Klausmeier PDE model in an appropriate limit.}
\subsection{Klausmeier Model}
One of the well-established models describing vegetation patterns in semi-arid environments is the Klausmeier model \cite{Klausmeier1999}. It reduces the plant-water dynamics to a small set of basic processes (rainfall, plant mortality, evaporation/drainage, vegetation-infiltration feedback and spatial dispersal). The relative simplicity of this modelling approach provides a framework for a rich mathematical analysis (e.g. \cite{Sherratt2005,Sherratt2007,Sherratt2010,Sherratt2011,Sherratt2013III,Sherratt2013IV,Sherratt2013V,Siteur2014,Ursino2006}). Suitably nondimensionalised \cite{Klausmeier1999,Sherratt2005}, the model is
\begin{subequations}
	\label{eq: Intro Klausmeier local}
	\begin{align}
	\frac{\partial u}{\partial t} &= \overbrace{u^2w}^{\text{plant growth}} - \overbrace{Bu}^{\text{plant mortality}} + \overbrace{\frac{\partial^2 u}{\partial x^2}}^{\text{plant dispersal}}, \label{eq:Intro Klausmeier local plants}   \\
	\frac{\partial w}{\partial t} &= \underbrace{A}_{\text{rainfall}}-\underbrace{w}_{\substack{\text{evaporation}\\\text{and drainage}}}-\underbrace{u^2w}_{\substack{\text{water consumption}\\\text{by plants}}} + \underbrace{\nu\frac{\partial w}{\partial x}}_{\substack{\text{water flow}\\\text{downhill}}}+ \underbrace{d\frac{\partial^2w}{\partial x^2}}_{\text{water diffusion}}.
	\end{align}
\end{subequations}
Here $u(x,t)$ denotes the plant density, $w(x,t)$ the water density, $x\in \R$ the space domain where $x$ is increasing in the uphill direction and $t>0$ the time. Originally, the model only focussed on a sloped spatial domain, but the addition of a water diffusion term to account for the possibility of a description on flat terrain is a well established addition \cite{Kealy2012,Siteur2014, Stelt2013, Zelnik2013}. To emphasise on the description of seed dispersal as a local process, we refer to this model as the ``local Klausmeier model'' throughout the paper. Water input to the system is assumed to occur at a constant rate, evaporation and drainage effects are proportional to the water density \cite{Rodriguez-Iturbe1999, Salvucci2001} and the plant mortality rate is density-independent. The nonlinearity in the description of water uptake and plant growth processes arises due to a soil modification by plants. The term is the product of the density of the consumer $u$ and of the available resource $uw$, the amount of water that is able to infiltrate into soil layers where plant roots consume water. The dependence on the plant density $u$ in the latter term occurs due to a positive correlation between the plant density and the soil surface's permeability \cite{Rietkerk2000, Valentin1999, Cornet1988}. Finally, plant growth is assumed to be proportional to the amount of consumed water \cite{Rodriguez-Iturbe1999, Salvucci2001}. The parameters $A$, $B$, $\nu$ and $d$ are combinations of different dimensional parameters but can be interpreted as rainfall, plant loss, the slope and water diffusion, respectively. 

In a previous paper \cite{Eigentler2018nonlocalKlausmeier} we have introduced nonlocal seed dispersal effects to the model by replacing the plant diffusion term by a convolution of a dispersal kernel (a probability density function) $\phi$ and the plant density $u$. The resulting model is referred to as the ``nonlocal Klausmeier model'' and is
\begin{subequations}
	\label{eq: Intro Klausmeier nonlocal}
	\begin{align}
	\frac{\partial u}{\partial t} &= u^2w - Bu + C\left( \phi(\cdot) \ast u(\cdot,t) - u(x,t)\right), \label{eq:Intro Klausmeier nonlocal plants}   \\
	\frac{\partial w}{\partial t} &= A-w-u^2w + \nu\frac{\partial w}{\partial x}+ d\frac{\partial^2w}{\partial x^2}.
	\end{align}
\end{subequations}
The additional parameters $C$ and $a$ represent the rate of plant dispersal and reciprocal width of the dispersal kernel, respectively. {\color{changes} Note that the convolution $(\phi \ast u)(x,t)$ accounts for all plant biomass dispersed to the space point $x$, including the fraction of biomass that is not dispersed. The final term in \eqref{eq:Intro Klausmeier nonlocal plants} ensures that the total biomass over the whole domain remains unchanged by the seed dispersal term.} The nonlocal model \eqref{eq: Intro Klausmeier nonlocal} and the local model \eqref{eq: Intro Klausmeier local} are related through a convergence result. If the dispersal kernel $\phi$ is decaying exponentially as $|x|\rightarrow \infty$, then the local model \eqref{eq: Intro Klausmeier local} can be obtained from the nonlocal model \eqref{eq: Intro Klausmeier nonlocal} in the limit $C\rightarrow \infty$ and $\sigma \rightarrow 0$ with $C=2/\sigma^2$, where $\sigma$ denotes the standard deviation of $\phi$ \cite{Eigentler2018nonlocalKlausmeier}.

Linear stability analysis of both the local and the nonlocal Klausmeier model with the Laplace kernel
\begin{align}\label{eq:difference equations: simulations: Laplacian}
\phi(x) = \frac{a}{2}e^{-a|x|}, \quad a>0, x\in\R.
\end{align}
provides analytically derived conditions for pattern onset to occur in the system. On flat ground, i.e. $\nu = 0$, Turing-type patterns form due to a diffusion-driven instability, i.e. there exists a threshold $d_c>0$ on the diffusion coefficient such that an instability occurs for all $d>d_c$. In the local model \eqref{eq: Intro Klausmeier local}, the threshold is
\begin{align}\label{eq: Models: Klausmeier: flat ground diffusion threshold}
d_c(A,B) = \frac{8B\sqrt{-A^2+A\sqrt{A^2-4B^2}+4B^2}-2A^2+2A\sqrt{A^2-4B^2}+16B^2}{B\left(A-\sqrt{A^2-4B^2}\right)^2}.
\end{align}
{\color{changes} A corresponding threshold $\widetilde{d_c}(A,B,C,a)$ for the nonlocal model \eqref{eq: Intro Klausmeier nonlocal} with the Laplace kernel \eqref{eq:difference equations: simulations: Laplacian} can be derived explicitly, but it omitted due to its algebraic complexity.}

On sloped ground ($\nu \ne 0$) pattern onset has been studied close to a Turing-Hopf bifurcation, which is characterised by an upper bound on the rainfall parameter $A$ that has been derived analytically valid to leading order in $\nu$ as $\nu \rightarrow \infty$ for both models \cite{Eigentler2018nonlocalKlausmeier, Sherratt2013IV}. The calculation of this upper bound on the precipitation parameter for the nonlocal model with the Laplace kernel shows that long range dispersal of seeds inhibits the formation of patterns by decreasing the size of the parameter region that supports the onset of patterns. On flat ground an increase of the dispersal kernel's standard deviation causes an increase in the threshold on the diffusion coefficient, while on sloped ground an increase in the dispersal kernel's width inhibits the formation of patterns by decreasing the upper bound on the rainfall parameter. 

The analytical derivation of pattern onset conditions in the nonlocal model is facilitated by the simple algebraic form of the Laplace kernel's Fourier transform and the associated polynomial structure of the dispersion relation in the linear stability analysis. For other biologically relevant seed dispersal kernels, conditions for pattern onset are not analytically tractable. Numerical simulations, however, confirm the qualitative trends obtained for the model with the Laplace kernel. Simulations further suggest that the dispersal kernel's decay at infinity has an influence on the rainfall threshold. For narrow dispersal kernels, those that account for more rare long-range dispersal events (algebraic decay rather than exponential) have an inhibitory effect on the formation of patterns, while for sufficiently wide kernels those that decay algebraically at infinity promote pattern formation compared to exponentially decaying kernels.

\subsection{Integrodifference Model}\label{sec: Models: integrodifference} 
Integrodifference models are a common type of model widely used in the description of systems in which dispersal processes are temporally separated from other dynamics such as growth/birth and decay/death. To account for the separation of plant growth and seed dispersal stages in dryland ecosystems, we propose the integrodifference model
\begin{subequations}
	\label{eq: Klausmeier integrodifference}
	\begin{align}
	u_{n+1}(x) &= C \phi \ast f(u_n,w_n), \label{eq: Klausmeier integrodifference plants}\\
	w_{n+1}(x) &= D  \phi_1 \ast g(u_n,w_n), \label{eq: Klausmeier integrodifference water}
	\end{align}
\end{subequations}
where
\begin{align*}
f\left(u,w\right) &= u^2w-Bu + \frac{1}{C}u, \\ 
g(u,w) &= A - u^2w- w + \frac{1}{D}w.
\end{align*}
Here $u_n(x)$ denotes the plant density, $w_n(x)$ the water density after $2n, n\in\N$ seasons and location $x\in\R$, where $x$ increases in the uphill direction. The formulation of the model splits the processes involved into two phases: a growth and evolution phase described by the functions $f(u,w)$ and $g(u,w)$ during which no dispersal occurs, and a dispersal phase modelled as a convolution of the evolved densities with dispersal kernels. As in the nonlocal Klausmeier model \eqref{eq: Intro Klausmeier nonlocal}, the plant dispersal kernel $\phi$ is symmetric and represents isotropic dispersal of plants. To model the flow of water downhill, the water dispersal kernel $\phi_1$ is {\color{changes}in general asymmetric with mean $\mu_{\phi_1}\le 0$. The special case of a symmetric kernel $\phi_1$} corresponds to the model on flat ground, which is the main aspect of the study in this paper. The model is based on the Klausmeier models \eqref{eq: Intro Klausmeier nonlocal} and \eqref{eq: Intro Klausmeier local} and thus the functions $f(u,w)$ and $g(u,w)$ consist of the terms describing the rate of change in the original model, appropriately scaled by the coefficients $C$ and $D$ to reflect the time between steps in the discrete model, added to the existing densities.

{\color{changes}As the integrodifference model \eqref{eq: Klausmeier integrodifference} arises directly from the local Klausmeier model \eqref{eq: Intro Klausmeier local}, the two models can be linked through a consistency result in an appropriate limit which shows that \textit{the integrodifference model \eqref{eq: Klausmeier integrodifference} tends to the local Klausmeier model \eqref{eq: Intro Klausmeier local}} as $T\rightarrow 0$. To show this, we consider the parameter setting
	\begin{align}\label{eq:difference: limit scalings}
	C=T, \quad \sigma_\phi^2=2T, \quad D=T, \quad \mu_{\phi_1} = -\nu T, \quad \widetilde{\sigma}_{\phi_1}^2 = 2dT,
	\end{align}	
	where $\mu$ and $\sigma$ denote the mean and standard deviation of the respective kernels and $	\widetilde{\sigma}_{\phi_1}^2 =  \int_{-\infty}^\infty \phi_1(x)x^2 \dif x$, the second raw moment of the kernel function $\phi_1$. Further, we define operators $P, P_T: C^\infty(\R \times [0,\infty), [0,\infty)^2) \rightarrow C^\infty(\R \times [0,\infty), [0,\infty)^2)$ by
	\begin{align}\label{eq: Difference: consistency pde operator}
	P\bm{v}(x,t) = \frac{\partial \bm{v}}{\partial t} (x,t) - \Gamma\bm{v}(x,t) - h_1(\bm{v}(x,t)),
	\end{align}
	for any function $\bm{v}(x,t) = (u(x,t),w(x,t)) \in C^\infty(\R \times [0,\infty), [0,\infty)^2)$, where 
	\begin{align*}
	\Gamma = \operatorname{diag}\left(\frac{\partial^2}{\partial x^2}, \nu \frac{\partial}{\partial x} + d\frac{\partial^2}{\partial x^2} \right), \quad h_1\left(\bm{v}\right) = \vectwo{u^2w-Bu}{A-u^2w-w},
	\end{align*}
	and
	\begin{align}\label{eq: Difference: consistency difference operator}
	P_T\bm{v}(x,t) = \frac{1}{T} \left( \bm{v}(x,t+T) - h_2(\bm{v}(x,t))\right),
	\end{align}
	where 
	\begin{align*}
	h_2\left(\bm{v}(x,t)\right) = \vectwo{-C \phi(\cdot) \ast f(u(\cdot,t),w(\cdot,t))}{- D  \phi_1(\cdot) \ast g(u
		(\cdot,t),w(\cdot,t))}.
	\end{align*}
	Note that the operator $P$ arises from the local Klausmeier model \eqref{eq: Intro Klausmeier local}, because $P\bm{v} = 0$ for any $\bm{v}$ that satisfies \eqref{eq: Intro Klausmeier local}. Similarly, $P_T$ represents the integrodifference model \eqref{eq: Klausmeier integrodifference}, because a sequence $\bm{v}_n(x) = \bm{v}(x,nT)$ satisfies \eqref{eq: Klausmeier integrodifference} if $P_T\bm{v}_n = 0$ for all $n\in\N$.   Utilising this reformulation of both models, it is possible to show the following result.
}

\begin{prop}\label{prop: Difference: limit case}
	{\color{changes}Consider the parameter setting \eqref{eq:difference: limit scalings} and let the kernel functions $\phi$ and $\phi_1$ have finite moments of all orders and decay exponentially as $|x| \rightarrow \infty$.  Then the integrodifference model \eqref{eq: Klausmeier integrodifference} is consistent with the local Klausmeier model \eqref{eq: Intro Klausmeier local}, i.e. 	
		\begin{align*}
		P\bm{v} - P_T\bm{v} \rightarrow 0 \quad \text{as} \quad T\rightarrow 0^+,
		\end{align*}
		for any $\bm{v} \in C^\infty(\R \times [0,\infty), [0,\infty)^2)$.}
\end{prop}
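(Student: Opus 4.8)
The plan is to prove consistency by Taylor-expanding both operators in powers of $T$ and verifying that, under the scaling \eqref{eq:difference: limit scalings}, their leading terms coincide with $P\bm{v}$ while all remaining terms are $O(T)$. I would treat the plant and water components of $\bm{v}=(u,w)$ separately, since the plant kernel $\phi$ is symmetric whereas the water kernel $\phi_1$ carries the advective mean $\mu_{\phi_1}=-\nu T$. The first step is to split off from $f$ and $g$ the contribution that merely reproduces the existing density: for the plant component, using $C=T$ and $f(u,w)=u^2w-Bu+T^{-1}u$, one has $C\phi\ast f=\phi\ast u+T\,\phi\ast(u^2w-Bu)$, so that the first component of $P_T\bm{v}$ equals
\begin{align*}
\frac{u(x,t+T)-(\phi\ast u)(x,t)}{T}-\big(\phi\ast(u^2w-Bu)\big)(x,t).
\end{align*}

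The core of the argument is the moment expansion of a convolution: for smooth, rapidly decaying $\psi$ and a kernel $\chi$ with finite raw moments $m_k[\chi]=\int y^k\chi(y)\,\dif y$, expanding $\psi(x-y)$ inside the integral yields $\chi\ast\psi=\sum_{k\ge 0}\frac{(-1)^k}{k!}m_k[\chi]\,\psi^{(k)}$. For $\phi$ the symmetry annihilates the odd moments and the prescribed variance $\sigma_\phi^2=2T$ gives $\phi\ast u=u+T\,u_{xx}+O(m_4[\phi])$; combined with $u(x,t+T)=u+T\,u_t+O(T^2)$, the quotient above collapses to $u_t-u_{xx}+O(T)$, while $\phi\ast(u^2w-Bu)\to u^2w-Bu$ because $\phi$ concentrates to a Dirac mass. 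This reproduces the plant equation of $P\bm{v}$ up to $O(T)$. The water component is identical in structure, except that here the first raw moment survives: with $m_1[\phi_1]=\mu_{\phi_1}=-\nu T$ and $m_2[\phi_1]=\widetilde{\sigma}_{\phi_1}^2=2dT$ the expansion gives $\phi_1\ast w=w+\nu T\,w_x+dT\,w_{xx}+O(T^2)$, so that division by $T$ recovers precisely the advection--diffusion operator $\nu w_x+d w_{xx}$ encoded in $\Gamma$.

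The main obstacle is controlling the tail of the moment expansion, that is, showing that every term beyond those retained above is $o(T)$ after division by $T$. This is where the kernel hypotheses enter. For a natural one-parameter family realising \eqref{eq:difference: limit scalings} --- for instance the rescalings $\phi(x)=s^{-1}\phi_0(x/s)$ with $s=\sqrt{2T}$ of a fixed unit-variance base kernel, as the Laplace kernel \eqref{eq:difference equations: simulations: Laplacian} exemplifies with $a=1/\sqrt{T}$ --- one has $m_k[\phi]=O(T^{k/2})$, so the first neglected term $\tfrac{1}{24}m_4[\phi]\,u_{xxxx}=O(T^2)$ disappears after dividing by $T$; for the shifted water kernel a short computation gives $m_3[\phi_1]=O(T^2)$ as well, so its contribution is likewise negligible. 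The assumptions that $\phi$ and $\phi_1$ possess finite moments of all orders and decay exponentially both legitimise the expansion and bound the order-$N$ Taylor remainder of each convolution by $\int|y|^{N}\chi(y)\,\dif y\cdot\sup|\psi^{(N)}|$, which is finite and of the correct order in $T$; the $C^\infty$ regularity of $\bm{v}$ supplies the bounded derivatives needed to close the estimate. Collecting the two components then gives $P\bm{v}-P_T\bm{v}=O(T)\to 0$ as $T\to 0^+$, as claimed.
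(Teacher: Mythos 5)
Your argument is correct and follows essentially the same route as the paper's proof: a moment (Taylor) expansion of the convolution integrals about $x$, termwise integration against the kernel, application of the scalings \eqref{eq:difference: limit scalings}, and a final Taylor expansion of $\bm{v}(x,t+T)$ in $T$. Your preliminary splitting of $C\,\phi\ast f$ into $\phi\ast u + T\,\phi\ast(u^2w-Bu)$ is tidier bookkeeping than the paper's direct expansion of the composite $f\left(u\left(x-\sigma_{\phi}z,t\right),w\left(x-\sigma_{\phi}z,t\right)\right)$, but the analytic content is identical; note only that for a generic asymmetric base kernel one gets $m_3[\phi_1]=O\left(T^{3/2}\right)$ rather than $O\left(T^{2}\right)$, which still vanishes after division by $T$ and so does not affect the conclusion.
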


{\color{changes}In other words, the model equations \eqref{eq: Klausmeier integrodifference} converge to the model equations \eqref{eq: Intro Klausmeier local} as $T\rightarrow 0^+$. The notion of \textit{consistency} is widely used in the field of numerical analysis, and crucially it does not imply convergence of model solutions. While we are unable to construct an argument to prove convergence, numerical simulations suggest that solutions of the integrodifference model \eqref{eq: Klausmeier integrodifference} converge to solutions of the local Klausmeier model \eqref{eq: Intro Klausmeier local} in the parameter setting \eqref{eq:difference: limit scalings} as $T\rightarrow 0^+$ (Fig. \ref{fig:difference equations: simulations: convergence simulation}).
}

On sloped ground Prop. \ref{prop: Difference: limit case} requires that $\nu = o(T^{-1})$, so that $T \nu \rightarrow 0$ as $T\rightarrow 0^+$ and $\nu \rightarrow \infty$, to facilitate any asymptotic analysis in $\nu$ similar to that of the local Klausmeier model \cite{Sherratt2005,Sherratt2010,Sherratt2011,Sherratt2013III,Sherratt2013IV,Sherratt2013V}. On flat ground, $\phi_1$ is symmetric and thus $\mu_{\phi_1}=0$ and $\widetilde{\sigma}_{\phi_1}$ coincides with the kernel's standard deviation $\sigma_{\phi_1}$. 

The parameter $T$ can be interpreted as the time between separate dispersal events and the scalings \eqref{eq:difference: limit scalings} are thus the main focus of the model's analysis in Section \ref{sec: integrodifference: linstab}. While the time between two seed dispersal events in a seasonal environment is usually fixed, we are interested in variations of $T$ as this parameter establishes a connection between the local Klausmeier model \eqref{eq: Intro Klausmeier local} and the integrodifference model \eqref{eq: Klausmeier integrodifference}. {\color{changes}In particular, as $T\rightarrow 0^+$ in the model, the length of each season tends to zero. As a consequence, this limit corresponds to the disappearance of any seasonality in the model and all processes are assumed to occur continuously in time, as, for example, in the Klausmeier model \eqref{eq: Intro Klausmeier local}.
	
	One kernel function satisfying the conditions in Prop. \ref{prop: Difference: limit case} is the Laplacian kernel \eqref{eq:difference equations: simulations: Laplacian}. We define the corresponding asymmetric Laplace kernel by $\phi_1(x) =  Ne^{-a_2x}$ for $x\ge 0$ and $\phi_1(x) = Ne^{(a_2-a_1)x}$ for $x < 0$, where $N = (a_2-a_1)a_2/(2a_2-a_1)$ and $a_2>a_1>0$. The parameter $a_1$ controls the extent of the asymmetry of the kernel function and $a_1=0$ yields the symmetric Laplace kernel \eqref{eq:difference equations: simulations: Laplacian}.  The model with this particular kernel function is studied in some detail in this paper as the Fourier transform of the symmetric Laplacian kernel $\widehat{\phi}(k) = a^2/(a^2+k^2)$ provides a significant simplification in the analysis of pattern onset.}

\begin{proof}[Proof of Proposition \ref{prop: Difference: limit case}]
	{\color{changes}Firstly, we show that
		\begin{align}\label{eq: Difference: consistency proof first step}
		P_T\bm{v}(x,t) = \frac{\bm{v}(x,t+T) - \bm{v}(x,t)}{T} - \Gamma\bm{v}(x,t) - h_1(\bm{v}(x,t)) + O\left(T^2\right).
		\end{align} 
		
		To this end, we define $\phi(x) = \sigma_{\phi}^{-1}\varphi(\sigma_{\phi}^{-1}x)$ and $\phi_1(x) = \widetilde{\sigma}_{\phi_1}^{-1}\varphi_1(\widetilde{\sigma}_{\phi_1}^{-1}x)$ Under the changes of variables $y=x-\sigma_{\phi}z$ and $y=x-\widetilde{\sigma}_{\phi_1}z$, respectively, $P_T \bm{v} = ((P_T\bm{v})_1,(P_T\bm{v})_2)$ satisfies
		\begin{subequations}
			\label{eq:difference: comparison: variable change}
			\begin{align}
			T\left(P_T\bm{v}\right)_1 &= u(x,t+T) - C\int_{-\infty}^\infty \varphi(z) f\left(u\left(x-\sigma_{\phi}z,t \right), w\left(x-\sigma_{\phi}z,t \right) \right) \dif z ,\\ \begin{split}
			T\left(P_T\bm{v}\right)_2 &= w(x,t+T) - D\int_{-\infty}^\infty \varphi_1(z) g\left(u\left(x-\widetilde{\sigma}_{\phi_1}z,t \right), w\left(x-\widetilde{\sigma}_{\phi_1}z,t \right) \right) \dif z. \end{split}
			\end{align}
		\end{subequations}
		
		Due to the parameter setting \eqref{eq:difference: limit scalings}, small values of $T$ correspond to small values of $\sigma_{\phi}$ and $\widetilde{\sigma}_{\phi_1}$ Hence, to investigate the system's behaviour for $T\ll 1$, consider the Taylor expansions of $u(x-\sigma_{\phi}z,t)$, $w(x-\sigma_{\phi}z,t)$, $u(x-\widetilde{\sigma}_{\phi_1}z,t)$ and $w(x-\widetilde{\sigma}_{\phi_1}z,t)$ about $x$, which give
		
		\begin{multline} \label{eq:difference: taylor expansion f}
		f\left(u\left(x-\sigma_{\phi}z,t \right), w\left(x-\sigma_{\phi}z,t \right) \right) \\ = u(x,t)^2w(x,t)+\left(\frac{1}{C}-B\right)u(x,t) \\ -\sigma_{\phi}z \left(u(x,t)^2 w_x(x,t) + \left(\frac{1}{C}-B\right)u_x(x,t) + 2u(x,t)u_x(x,t)w(x,t)\right) \\ +\sigma_{\phi}^2z^2 \left(\frac{1}{2}u(x,t)^2 w_{xx}(x,t) + u_x(x,t)^2w(x,t) + \frac{1}{2}\left(\frac{1}{C}-B\right)u_{xx}(x,t) \right. \\ \left. +u(x,t)u_{xx}(x,t)w(x,t) + 2u(x,t)u_x(x,t)w_x(x,t) \right) + O\left(\sigma_{\phi}^3\right),
		\end{multline}
		and similarly
		\begin{multline}\label{eq:difference: taylor expansion g}
		g\left(u\left(x-\widetilde{\sigma}_{\phi_1}z \right), w\left(x-\widetilde{\sigma}_{\phi_1}z \right) \right) \\ = A - u(x,t)^2w(x,t) + \left(\frac{1}{D} -1\right)w(x,t) \\ - \widetilde{\sigma}_{\phi_1}z \left(-u(x,t)^2w_x(x,t) + \left(\frac{1}{D} -1\right) w_x(x,t) - 2u(x,t)u_x(x,t)w(x,t)\right) \\ + \widetilde{\sigma}_{\phi_1}^2z^2 \left(-u_x(x,t)^2w(x,t) - \frac{1}{2}u(x,t)^2w_{xx}(x,t) + \frac{1}{2}\left(\frac{1}{D} -1\right) w_{xx}(x,t) \right. \\ \left.  -u(x,t)u_{xx}(x,t)w(x,t) - 2u(x,t)u_x(x,t)w_x(x,t) \right) + O\left(\widetilde{\sigma}_{\phi_1}^3\right),
		\end{multline}
		where the subscripts of $u$ and $w$ denote partial differentiation. Substitution of this into \eqref{eq:difference: comparison: variable change} and term-wise integration using Watson's Lemma (e.g. \cite{Miller2006}) gives 
		\begin{multline*}
		T\left(P_T\bm{v}\right)_1 = u(x,t+T) -  C\left(u(x,t)^2w(x,t) + \left(\frac{1}{C}-B\right)u(x,t) \right. \\ \left. + \left(u(x,t)^2w_{xx}(x,t) + 2(u_{x}(x,t))^2w(x,t) + \left(\frac{1}{C}-B\right)u_{xx}(x,t) \right. \right.\\ \left. \left.+2u(x,t)u_{xx}(x,t)w(x,t) + 4u(x,t)u_{x}(x,t)w_{x}(x,t) \right) \sigma_{\phi}^2 \int_{-\infty}^\infty \varphi(z)z^2\dif z + O\left(\sigma_{\phi}^3\right)\right),  
		\end{multline*}
		and
		\begin{multline*}
		T\left(P_T\bm{v}\right)_2 = w(x,t+T) \\ -  D\left(  2\left(A-u(x,t)^2w(x,t) +\left(\frac{1}{D}-1\right) w(x,t)\right) \right.  \left. \int_{-\infty}^\infty \varphi_1(z)\dif z \right. \\ \left.  +  \left(u(x,t)^2w_{x}(x,t)-\left(\frac{1}{D} -1\right)w_{x}(x,t) +2u(x,t)u_{x}(x,t)w(x,t)\right) \widetilde{\sigma}_{\phi_1}\int_{-\infty}^\infty \varphi_1(z)z\dif z   \right. \\ 
		\left. + \left(-2(u_{x}(x,t))^2w(x,t) - u(x,t)^2w_{xx}(x,t) +\left(\frac{1}{D}-1\right) w_{xx}(x,t) \right.\right. \\ \left.\left. -2u(x,t)u_{xx}(x,t)w(x,t) -4u(x,t)u_{x}(x,t)w_{x}(x,t) \right) \frac{\widetilde{\sigma}_{\phi_1}^2}{2}\int_{-\infty}^\infty \varphi_1(z)z^2\dif z   + O\left(\widetilde{\sigma}_{\phi_1}^3\right) \right).
		\end{multline*}
		Using that $\varphi(x) = \sigma_{\phi}\phi(\sigma_{\phi}x)$, $\varphi_1(x) = \widetilde{\sigma}_{\phi_1}\phi_1(\widetilde{\sigma}_{\phi_1}x)$, and the definition of the moments of a probability distribution give
		\begin{multline*}
		T\left(P_T\bm{v}\right)_1 = u(x,t+T) -   C \left( u(x,t)^2w(x,t)+ \left(\frac{1}{C}-B\right) u(x,t) + \frac{\sigma_{\phi}^2}{2C}u_{xx}(x,t) \right.\\ \left. +  \sigma_{\phi}^2 \left(\frac{1}{2}u(x,t)^2w_{xx}(x,t) + (u_{x}(x,t))^2w(x,t) - \frac{1}{2}Bu_{xx}(x,t) \right.\right. \\ \left.\left. +u(x,t)u_{xx}(x,t)w(x,t) +2u(x,t)u_{x}(x,t)w_{x}(x,t)\right) + O\left(\sigma_{\phi}^3\right)\right), 
		\end{multline*}
		and
		\begin{multline*}
		T\left(P_T\bm{v}\right)_2 = w(x,t+T) - D\left( A-u(x,t)^2w(x,t) +\left(\frac{1}{D}-1 \right) w(x,t) \right. \\ \left. -\frac{\mu_{\phi_1}}{D} w_{x}(x,t) +\frac{\widetilde{\sigma}_{\phi_1}^2}{2D}w_{xx}(x,t) +\mu_{\phi_1}\left(u(x,t)^2w_{x}(x,t)+w_{x}(x,t)  \right.\right. \\ \left. \left. +2u(x,t)u_{x}(x,t)w(x,t) \right) + \widetilde{\sigma}_{\phi_1}^2\left(-(u_{x}(x,t))^2w(x,t)-\frac{1}{2}u(x,t)^2w_{xx}(x,t) \right.\right. \\ \left.\left.-\frac{1}{2}w_{xx}(x,t) -u(x,t)u_{xx}(x,t)w(x,t) -2u(x,t)u_{x}(x,t)w_{x}(x,t) \right) + O\left(\widetilde{\sigma}_{\phi_1}^3\right) \right).
		\end{multline*}
		
		Applying \eqref{eq:difference: limit scalings} yields
		\begin{multline*}
		T\left(P_T\bm{v}\right)_1 = u(x,t+T)   \\ -\left(u(x,t) + T\left(u(x,t)^2w(x,t)-Bu(x,t)+u_{xx}(x,t)\right) \right)+ O\left(T^2\right),
		\end{multline*}
		and
		\begin{multline*}
		T\left(P_T\bm{v}\right)_2 = w(x,t+T) \\ -\left(w(x,t) + T\left(A-u(x,t)^2w(x,t)-w(x,t)+\nu w_x(x,t)+dw_{xx}(x,t) \right) \right) + O\left(T^2\right),
		\end{multline*}
		which shows \eqref{eq: Difference: consistency proof first step}.
		
		The Taylor expansions $u(x,t+T) = u(x,t) + T u_t (x,t) +O(T^2)$ and $w(x,t+T) = w(x,t) + T w_t (x,t)+O(T^2)$ yield
		\begin{align*}
		P_T\bm{v}(x,t) = \frac{\partial \bm{v}}{\partial t}(x,t) - \Gamma\bm{v}(x,t) - h(\bm{v}(x,t)) +O\left(T^2\right),
		\end{align*}
		and thus
		\begin{align*}
		P\bm{v} - P_T\bm{v} = O\left(T^2\right),
		\end{align*}
		which tends to zero as $T\rightarrow 0$.
	}
	
\end{proof}

%%%%%%%%%%%%%%%%%%%%%%%%%%%%%%%%%%%%%%%%%%%%%%%%%%%%%%%%%%%%%%%%%%%%%%%%%%%%%%%%%%%%%%%%%%%%%%%%%%%%%%%%%%%%%%%%%%%%%%%%%%%
%%%%%%%%%%%%%%%%%%%%%%%%%%%%%%             DIFFERENCE MODEL             %%%%%%%%%%%%%%%%%%%%%%%%%%%%%%%%%%%%%%%%%%%%%%%%%
%%%%%%%%%%%%%%%%%%%%%%%%%%%%%%%%%%%%%%%%%%%%%%%%%%%%%%%%%%%%%%%%%%%%%%%%%%%%%%%%%%%%%%%%%%%%%%%%%%%%%%%%%%%%%%%%%%%%%%%%%%% 

\section{Linear Stability Analysis}\label{sec: integrodifference: linstab}
A common approach to study the onset of spatial patterns in a model is linear stability analysis. Spatial patterns occur if a steady state that is stable to spatially homogeneous perturbations becomes unstable if a spatially heterogeneous perturbation is introduced.
In this section we show that such a linear stability analysis of the integrodifference model \eqref{eq: Klausmeier integrodifference} on flat ground with the Laplacian kernels in the parameter setting \eqref{eq:difference: limit scalings} yields a condition for pattern onset that is equivalent to the corresponding condition for the local Klausmeier model \eqref{eq: Intro Klausmeier local}. This implies that pattern onset is independent of the parameter $T$, the temporal separation of seed dispersal events. 

%%%%%%%%%%%%%%%%%%%%%%%%%%%%%%%%%%%%%%%%%%%%%%%%%%%%%%%%%%%%%%%%%%%%%%%%%%%%%%%%%%%%%%%%%%%%%%%%%%%%%%%%%%%%%%%%%%%%%%%%%%%
%%%%%%%%%%%%%%%%%%%%%%%%%%%%%%             LINEAR STABILITY             %%%%%%%%%%%%%%%%%%%%%%%%%%%%%%%%%%%%%%%%%%%%%%%%%%%
%%%%%%%%%%%%%%%%%%%%%%%%%%%%%%%%%%%%%%%%%%%%%%%%%%%%%%%%%%%%%%%%%%%%%%%%%%%%%%%%%%%%%%%%%%%%%%%%%%%%%%%%%%%%%%%%%%%%%%%%%%% 

The steady states of \eqref{eq: Klausmeier integrodifference} are identical with those of the Klausmeier models \eqref{eq: Intro Klausmeier local} and \eqref{eq: Intro Klausmeier nonlocal}, i.e.
\begin{align*}
(\overline{u}_1,\overline{w}_1) &= \left(0,A\right), \quad 
(\overline{u}_2,\overline{w}_2) = \left(\frac{2B}{A-\sqrt{A^2-4B^2}},\frac{A-\sqrt{A^2-4B^2}}{2}\right), \\
(\overline{u}_3,\overline{w}_3) &= \left(\frac{2B}{A+\sqrt{A^2-4B^2}},\frac{A+\sqrt{A^2-4B^2}}{2}\right).
\end{align*} 
Existence of $(\overline{u}_2,\overline{w}_2)$ and $(\overline{u}_3,\overline{w}_3)$ requires $A>A_{\min}:=2B$. The steady states are independent of $C$, $D$ and the dispersal widths $a$, $a_1$ and $a_2$ and are thus independent of frequency changes to the temporal intermittency when using the scalings \eqref{eq:difference: limit scalings}.  For the Klausmeier models $(\overline{u}_1,\overline{w}_1)$ and $(\overline{u}_2,\overline{w}_2)$ are stable to spatially homogeneous perturbations, while $(\overline{u}_3,\overline{w}_3)$ is unstable to spatially homogeneous perturbations in the biologically relevant parameter region $B<2$ \cite{Klausmeier1999, Sherratt2005, Eigentler2018nonlocalKlausmeier}. Preservation of this structure of the steady states in the integrodifference model \eqref{eq: Klausmeier integrodifference} is only achieved in a certain parameter region.

\begin{prop}\label{prop: Difference: LinStab: stability homogeneous perturbations}
	If 	
	\begin{align}\label{eq:difference equations: time only stability region for all A,B}
	D= \ell\overline{D}, \ \ell< 1, \quad C=\frac{\ell_1\overline{D}}{B(m-\ell_1\overline{D})}, \ m > 2, \ell_1<1,
	\end{align}
	where 
	\begin{align}\label{eq:difference equations: Dbar}
	\overline{D} = \frac{2\left(A^2-A\sqrt{A^2-4B^2} - 2B^2\right)}{A^2-A\sqrt{A^2-4B^2}},
	\end{align}
	then $(\overline{u}_1,\overline{w}_1)$ and $(\overline{u}_2,\overline{w}_2)$ are stable to spatially homogeneous perturbations, and $(\overline{u}_3,\overline{w}_3)$ is unstable to spatially homogeneous perturbations.
\end{prop}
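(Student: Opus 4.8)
The plan is to reduce the stability question to the linear algebra of a single planar map. Because $\phi$ and $\phi_1$ are probability densities they integrate to unity, so convolving a spatially constant state with either kernel returns that state unchanged; hence spatially homogeneous solutions of \eqref{eq: Klausmeier integrodifference} evolve under the two-dimensional map $(u,w)\mapsto\bigl(Cf(u,w),Dg(u,w)\bigr)$, in which the kernels play no role. I would linearise this map at each steady state and apply the standard discrete-time (Schur--Cohn/Jury) criterion: writing the characteristic polynomial of the Jacobian $J$ as $\lambda^2-\tau\lambda+\Delta$ with $\tau=\tr J$ and $\Delta=\det J$, both eigenvalues lie in the open unit disc if and only if $\Delta<1$, $1-\tau+\Delta>0$ and $1+\tau+\Delta>0$.

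For $(\overline u_1,\overline w_1)=(0,A)$ the Jacobian is diagonal with entries $1-CB$ and $1-D$, so stability reduces to $0<CB<2$ and $0<D<2$; both are immediate from \eqref{eq:difference equations: time only stability region for all A,B}, since $D=\ell\overline D<\overline D<1$ and $CB=\ell_1\overline D/(m-\ell_1\overline D)<\overline D/(2-\overline D)<1$. For the two coexistence states I would use $J=I+\operatorname{diag}(C,D)J_0$, where $J_0=\left(\begin{smallmatrix}B & \overline u^2\\ -2B & -\overline u^2-1\end{smallmatrix}\right)$ is the kinetic Jacobian of the reaction terms evaluated with $\overline u\,\overline w=B$. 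The elementary identity $\det(I+M)=1+\tr M+\det M$ then yields $1-\tau+\Delta=\det\bigl(\operatorname{diag}(C,D)J_0\bigr)=CD\det J_0=CDB(\overline u^2-1)$. Since both coexistence states satisfy $B\overline u^2-A\overline u+B=0$, their $u$-components obey $\overline u_2\overline u_3=1$ with $\overline u_2>\overline u_3$, so that $\overline u_3<1<\overline u_2$ whenever $A>2B$; thus $1-\tau+\Delta<0$ at $(\overline u_3,\overline w_3)$, which forces a real eigenvalue larger than $1$ and establishes the instability of $(\overline u_3,\overline w_3)$ with no further computation.

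It remains to prove linear stability of $(\overline u_2,\overline w_2)$, and this is where the definition of $\overline D$ is essential. The key simplification, which I would derive from $\overline u_2\overline w_2=B$ and $A=B\overline u_2+\overline w_2$, is the rewriting of \eqref{eq:difference equations: Dbar} as $\overline D=2/(\overline u_2^2+1)$; it collapses the water diagonal entry to $1-D(\overline u_2^2+1)=1-2\ell$, keeping it in $(-1,1)$ exactly when $\ell<1$. Expressing $\tr J$ and $\det J$ through $\overline u_2$ and substituting the scalings \eqref{eq:difference equations: time only stability region for all A,B}, the condition $1+\tau+\Delta>0$ becomes a sum of manifestly positive terms for $\ell<1$, so the only genuinely non-trivial requirement is $\Delta<1$. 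I expect this determinant condition to be the main obstacle: after substitution it reduces to the algebraic inequality $CB\bigl[1+2\ell(1-\overline D)\bigr]<2\ell$, and the crux is to verify that the prescribed value $CB=\ell_1\overline D/(m-\ell_1\overline D)$ satisfies it throughout the admissible range $\ell,\ell_1<1$, $m>2$. The plan for this final step is to clear denominators and bound the resulting polynomial in $\ell,\ell_1,m$ using $\overline D<1$, $m>2$ and the monotonicity of $CB$ in $\ell_1$ and $m$.
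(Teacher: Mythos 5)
Your reduction is correct and in places cleaner than the paper's own argument: the observation that homogeneous perturbations only see the planar map, the factorisation $J=I+\operatorname{diag}(C,D)J_0$, the use of $\overline{u}_2\overline{u}_3=1$ to dispose of $(\overline{u}_3,\overline{w}_3)$, the identity $\overline{D}=2/(\overline{u}_2^2+1)$, and the positivity of $1\pm\tau+\Delta$ at $(\overline{u}_2,\overline{w}_2)$ all check out. The gap is exactly the one step you defer. With $CB=\ell_1\overline{D}/(m-\ell_1\overline{D})$ taken literally from the statement, $CB$ is independent of $\ell$, so as $\ell\to 0^+$ the left-hand side of your inequality $CB[1+2\ell(1-\overline{D})]<2\ell$ tends to $CB>0$ while the right-hand side tends to $0$: the inequality is \emph{false} on part of the admissible range, and no clearing of denominators or monotonicity bound can rescue it. Equivalently, $\det J=1+CB-D(\overline{u}_2^2+1)+CDB(\overline{u}_2^2-1)\to 1+CB>1$ as $D\to 0$ with $C$ fixed, so $(\overline{u}_2,\overline{w}_2)$ genuinely loses stability there; for instance $A=2$, $B=0.45$, $\ell=10^{-3}$, $\ell_1=0.5$, $m=5$ gives $\det J\approx 1.009$.

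The resolution is that the paper's own proof works with $C=\ell_1 D/(B(m-\ell_1 D))$, i.e.\ with $D=\ell\overline{D}$ rather than $\overline{D}$ inside the formula for $C$ (the proposition as printed appears to contain a typo). Then $CB=\ell_1 D/(m-\ell_1 D)$ vanishes linearly with $\ell$, your determinant condition reduces (after dividing by $\ell$) to $\ell_1\overline{D}\,[1+2\ell(2-\overline{D})]<2m$, whose left side is below $\overline{D}(5-2\overline{D})<3$ while the right side exceeds $4$; with that repair the rest of your argument closes. For comparison, the paper derives explicit thresholds $\overline{C_1},\overline{C_2},\overline{C_3}$ from the Jury conditions and runs a case analysis on $D$ relative to $1/2$ and $\overline{D}$, whereas your route isolates the single nontrivial inequality ($\det J<1$, the paper's $C<\overline{C_1}$) with far less algebra --- but you must either adopt the corrected formula for $C$ or add a lower bound on $\ell$, since as stated the claim fails.
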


This condition is sufficient but not necessary. Outside this region further restrictions on the rainfall parameter $A$ can be imposed to guarantee conservation of the steady state structure. In the limiting case \eqref{eq:difference: limit scalings} such a restriction on the rainfall parameter cannot be avoided. The following condition ensures that \eqref{eq:difference equations: time only stability region for all A,B} holds in the limiting case \eqref{eq:difference: limit scalings}.
\begin{cor}\label{cor: Difference: LinStab: parameter region providing steady state structure in limit case}
	If
	\begin{align}\label{eq: Difference: LinStab: parameter region limit case}
	A^2 < A_+^2:=\min\left\{  \frac{4B^2}{(2-T)T}, \frac{B(BT+1)^2}{T}\right\}, \quad T <\frac{1}{2}, \quad B<2,
	\end{align}
	in \eqref{eq: Klausmeier integrodifference} with $C=D=T$, then $(\overline{u}_1,\overline{w}_1)$ and $(\overline{u}_2,\overline{w}_2)$ are stable to spatially homogeneous perturbations, and $(\overline{u}_3,\overline{w}_3)$ is unstable to spatially homogeneous perturbations.
\end{cor}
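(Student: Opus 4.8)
The plan is to deduce the Corollary directly from Proposition~\ref{prop: Difference: LinStab: stability homogeneous perturbations} by showing that the single point $C=D=T$ lies inside the parametrised family described by \eqref{eq:difference equations: time only stability region for all A,B} for a suitable choice of $\ell$, $\ell_1$ and $m$. That is, I would look for constants with $\ell<1$, $\ell_1<1$, $m>2$ satisfying $\ell\overline{D}=T$ and $\ell_1\overline{D}/\big(B(m-\ell_1\overline{D})\big)=T$, and then re-express the feasibility of this search as the inequality \eqref{eq: Difference: LinStab: parameter region limit case}. The first preparatory step is to simplify $\overline{D}$ from \eqref{eq:difference equations: Dbar}: writing $S=\sqrt{A^2-4B^2}$ and using $(A-S)(A+S)=4B^2$ collapses it to
\begin{align*}
\overline{D} = 1 - \frac{\sqrt{A^2-4B^2}}{A},
\end{align*}
which lies in $(0,1]$ whenever the steady states exist ($A>2B$).

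For the water coefficient, $D=T$ forces $\ell=T/\overline{D}$, so the constraint $\ell<1$ is equivalent to $T<\overline{D}$. Substituting the simplified expression and squaring---legitimate since $T<1/2<1$ keeps both sides positive---reduces $T<\overline{D}$ to $A^2<4B^2/\big((2-T)T\big)$, which is the first entry of the minimum defining $A_+^2$.

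The plant coefficient carries the real content, because here $\ell_1$ and $m$ are two free parameters linked by only one equation. Solving $C=T$ for $\ell_1$ gives $\ell_1=TBm/\big(\overline{D}(1+TB)\big)$, a quantity that is increasing in $m$; one simultaneously checks $m-\ell_1\overline{D}=m/(1+TB)>0$, so $C>0$ holds automatically. An admissible pair $(m,\ell_1)$ with $m>2$ and $\ell_1<1$ therefore exists if and only if the infimal value attained as $m\to 2^{+}$, namely $2TB/\big(\overline{D}(1+TB)\big)$, is strictly less than $1$. Rewriting $2TB<\overline{D}(1+TB)$ with $\overline{D}=1-S/A$ reduces to $\sqrt{1-4B^2/A^2}<(1-TB)/(1+TB)$, where $1-TB>0$ is guaranteed by $T<1/2$ and $B<2$; squaring and using $(1+TB)^2-(1-TB)^2=4TB$ then yields $A^2<B(BT+1)^2/T$, the second entry of the minimum.

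Taking the minimum of the two thresholds enforces $\ell<1$ and the existence of admissible $\ell_1<1$, $m>2$ at once, so substituting this choice into Proposition~\ref{prop: Difference: LinStab: stability homogeneous perturbations} delivers the claimed stability of $(\overline{u}_1,\overline{w}_1)$, $(\overline{u}_2,\overline{w}_2)$ and instability of $(\overline{u}_3,\overline{w}_3)$, with the hypothesis $B<2$ inherited from the Proposition. I expect the main obstacle to be precisely the plant-coefficient step: the equation for $C$ couples two free parameters, and the right move is to eliminate one of them and recognise that feasibility is decided by the boundary case $m=2$, converting an existence question into a single sharp inequality. The remaining work is routine algebra, the only subtlety being the sign checks (notably $1-TB>0$ and $1-T>0$) that legitimise each squaring.
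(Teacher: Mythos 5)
Your proposal is correct and follows exactly the route the paper intends: the text preceding the corollary states that condition \eqref{eq: Difference: LinStab: parameter region limit case} ``ensures that \eqref{eq:difference equations: time only stability region for all A,B} holds in the limiting case'', and your argument supplies precisely that reduction, with the simplification $\overline{D}=1-\sqrt{A^2-4B^2}/A$, the equivalence $\ell<1\iff A^2<4B^2/\bigl((2-T)T\bigr)$, and the elimination of $\ell_1$ in favour of $m$ with feasibility decided at $m\to 2^+$ giving $A^2<B(BT+1)^2/T$, all checking out.
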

In the limit $T\rightarrow 0^+$ this becomes the whole $A$-$B$ parameter region considered for the continuous-time Klausmeier models, providing a reasonable framework for a comparison of the two models. The upper bounds on $T$ and $A$ do, however, introduce a significant restriction on the model as no arbitrarily large time between dispersal events or large precipitation volumes $A$ can be considered. In this, as well as the parameter region given by \eqref{eq:difference equations: time only stability region for all A,B}, the plant density $u_n(x)$ and the water density $w_n(x)$ remain positive for initial conditions close to the steady states. This is sufficient for the linear stability analysis and simulations that follow.
In the parameter region in which $(\overline{u}_2,\overline{w}_2)$ is unstable, four different behaviours of the system's solution can be observed; (i) convergence to the desert steady state, (ii) divergence, (iii) a chaotic solution or (iv) a periodic solution for which period doubling occurs as $T$ is increased. However, these different behaviours can yield negative densities of the system's quantities and are thus not considered further in this paper.

Spatial patterns of \eqref{eq: Klausmeier integrodifference} arise if the steady state $(\overline{u}, \overline{w}):=(\overline{u}_2, \overline{w}_2)$, which is stable to spatially homogeneous perturbations, becomes unstable if a spatially heterogeneous perturbation is introduced. 

\begin{prop}\label{prop: Difference: LinStab: stability heterogeneous perturbations general}
	The steady state $(\overline{u}, \overline{w})$ is stable to spatially heterogeneous perturbations if $|\lambda(k)|<1$ for both eigenvalues of the Jacobian
	\begin{align}\label{eq:difference: linear stability Jacobian}
	J = \mattwo{C\widehat{\phi}(k)\alpha}{C\widehat{\phi}(k)\beta}{D\widehat{\phi_1}(k)\gamma}{D\widehat{\phi_1}(k)\delta},
	\end{align}
	for all $k>0$, where
	\begin{align}\label{eq: difference: LinStab: linearisation coefficients}
	\begin{split}
	\alpha & = f_u(\overline{u},\overline{w}) = \frac{BC+1}{C}, \\ \beta &= f_w(\overline{u},\overline{w}) = \frac{4B^2}{\left(A-\sqrt{A^2-4B^2}\right)^2}, \quad
	\gamma = g_u(\overline{u},\overline{w}) = -2B, \\ \delta &= g_w(\overline{u},\overline{w}) = -\frac{2\left(A^2D-AD\sqrt{A^2-4B^2} -A^2+A\sqrt{A^2-4B^2} +2B^2\right)}{D\left(A-\sqrt{A^2-4B^2}\right)^2}.
	\end{split}
	\end{align}
\end{prop}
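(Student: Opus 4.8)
The plan is to carry out a standard linear stability analysis: linearise the integrodifference map \eqref{eq: Klausmeier integrodifference} about the steady state $(\overline{u},\overline{w}):=(\overline{u}_2,\overline{w}_2)$ and then diagonalise the resulting linear convolution operator using spatial Fourier modes. First I would write $u_n(x) = \overline{u} + \tilde{u}_n(x)$ and $w_n(x) = \overline{w} + \tilde{w}_n(x)$ for small perturbations $\tilde{u}_n, \tilde{w}_n$, and Taylor-expand $f$ and $g$ to first order about $(\overline{u},\overline{w})$. The constant terms are absorbed by the steady-state relations: since $\phi$ and $\phi_1$ integrate to unity, convolving a constant leaves it unchanged, so that $C\,\phi\ast f(\overline{u},\overline{w}) = Cf(\overline{u},\overline{w}) = \overline{u}$ and $D\,\phi_1\ast g(\overline{u},\overline{w}) = Dg(\overline{u},\overline{w}) = \overline{w}$, which are exactly the defining equations of the steady state. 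The linearised system is therefore
\begin{align*}
\tilde{u}_{n+1} &= C\,\phi\ast\left(\alpha\tilde{u}_n + \beta\tilde{w}_n\right),\\
\tilde{w}_{n+1} &= D\,\phi_1\ast\left(\gamma\tilde{u}_n + \delta\tilde{w}_n\right),
\end{align*}
with $\alpha=f_u$, $\beta=f_w$, $\gamma=g_u$, $\delta=g_w$ evaluated at $(\overline{u},\overline{w})$.

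Next I would substitute the Fourier ansatz $\tilde{u}_n(x) = \xi_n e^{ikx}$, $\tilde{w}_n(x) = \eta_n e^{ikx}$, exploiting that the complex exponential is an eigenfunction of each convolution operator: $\phi\ast e^{ik\cdot}(x) = \widehat{\phi}(k)e^{ikx}$, and likewise for $\phi_1$. (Equivalently, one may take the spatial Fourier transform of the linearised system and invoke the convolution theorem.) This decouples the spatial dependence and reduces the dynamics at each wavenumber $k$ to the two-dimensional linear recurrence $(\xi_{n+1},\eta_{n+1})^\top = J(k)\,(\xi_n,\eta_n)^\top$, where $J(k)$ is precisely the matrix \eqref{eq:difference: linear stability Jacobian}. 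Iterating yields $(\xi_n,\eta_n)^\top = J(k)^n(\xi_0,\eta_0)^\top$, so the amplitude of the mode with wavenumber $k$ decays as $n\to\infty$ exactly when the spectral radius of $J(k)$ is strictly less than one, i.e. when both eigenvalues $\lambda(k)$ satisfy $|\lambda(k)|<1$. Requiring this for every $k>0$ (the case $k=0$ corresponding to the homogeneous perturbations already handled in Prop. \ref{prop: Difference: LinStab: stability homogeneous perturbations}) gives the stated sufficient condition for stability to spatially heterogeneous perturbations.

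The only genuinely technical step is verifying that the linearisation coefficients take the explicit forms in \eqref{eq: difference: LinStab: linearisation coefficients}. I would compute $f_u = 2uw - B + 1/C$, $f_w = u^2$, $g_u = -2uw$ and $g_w = -u^2 - 1 + 1/D$, and substitute the steady-state identity $\overline{u}\,\overline{w} = B$ together with $\overline{u}_2 = 2B/(A-\sqrt{A^2-4B^2})$; the reductions giving $\alpha$, $\beta$ and $\gamma$ are immediate, while reducing $g_w$ to the stated $\delta$ is a routine but slightly lengthy rationalisation using $\overline{w}_2 = (A-\sqrt{A^2-4B^2})/2$. I expect the main obstacle to be bookkeeping rather than conceptual: one must keep track of the fact that $\widehat{\phi_1}(k)$ is complex for the asymmetric water kernel, so that $J(k)$ and its eigenvalues are complex-valued and the relevant quantity is the modulus $|\lambda(k)|$ rather than $\lambda(k)$ itself. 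A minor additional point is that, strictly, passing from decay of each individual Fourier mode to decay of a general perturbation would require a uniform bound $\sup_{k>0}\rho(J(k))<1$; since the statement asserts only sufficiency through the mode-by-mode criterion, this subtlety does not affect the argument as phrased.
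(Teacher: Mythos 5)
Your proposal is correct and follows essentially the same route as the paper: linearise about $(\overline{u},\overline{w})$, diagonalise the convolution operators via spatial Fourier modes, and reduce the dynamics at each wavenumber to the two-dimensional recurrence governed by $J(k)$, with the coefficients $\alpha,\beta,\gamma,\delta$ obtained from $f_u,f_w,g_u,g_w$ and the steady-state identities exactly as in \eqref{eq: difference: LinStab: linearisation coefficients}. Your closing remarks on the complex-valuedness of $\widehat{\phi_1}(k)$ and on the mode-by-mode versus uniform decay subtlety are sensible additions but do not change the argument.
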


Due to the asymmetry of $\phi_1$ some of the entries of the Jacobian \eqref{eq:difference: linear stability Jacobian} are complex-valued. A significant simplification can therefore be achieved by considering the integrodifference model \eqref{eq: Klausmeier integrodifference} on flat ground. This corresponds to $a_1 =0$ in $\phi_1$. As a consequence, the Jury conditions (see e.g. \cite{Murray1989}) can be used to determine the steady state's stability to spatially heterogeneous perturbations.  To study this in more detail, and in particular to show that the model does not provide information on effects the temporal separation of seed dispersal events, we focus on the limiting case \eqref{eq:difference: limit scalings} and the Laplacian kernel \eqref{eq:difference equations: simulations: Laplacian}. 

\begin{prop}\label{prop: Difference: LinStab: sufficient condition for patterns on flat ground}
	The steady state $(\overline{u}, \overline{w})$ of the integrodifference model \eqref{eq: Klausmeier integrodifference} under the scalings \eqref{eq:difference: limit scalings} on flat ground with the Laplacian kernels \eqref{eq:difference equations: simulations: Laplacian} is unstable to spatially heterogeneous perturbations if
	\begin{align}\label{eq:difference: linear stability: flat ground jury2}
	1+\det(J)- |\tr(J)| <0, \quad \text{for some } k>0,
	\end{align}
	where J is the Jacobian given in Proposition \ref{prop: Difference: LinStab: stability heterogeneous perturbations general} with $a_1=0$.
\end{prop}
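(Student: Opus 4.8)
The plan is to recognise the quantity in \eqref{eq:difference: linear stability: flat ground jury2} as a combination of the Jury (Schur--Cohn) conditions for the $2\times2$ Jacobian $J$, and to show that its sign reversal forces a real eigenvalue outside the unit disk. First I would record that on flat ground the water kernel is symmetric ($a_1=0$), so its Fourier transform, like that of the symmetric plant kernel $\phi$, is real-valued; explicitly, with the Laplacian kernels \eqref{eq:difference equations: simulations: Laplacian} one has $\widehat{\phi}(k)=a^2/(a^2+k^2)$ and $\widehat{\phi_1}(k)=a_2^2/(a_2^2+k^2)$, both real. Hence every entry of $J$ in \eqref{eq:difference: linear stability Jacobian} is real, and the two eigenvalues $\lambda(k)$ are the roots of the real quadratic
\begin{align*}
P(\lambda)=\lambda^2-\tr(J)\,\lambda+\det(J).
\end{align*}

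By Proposition \ref{prop: Difference: LinStab: stability heterogeneous perturbations general}, stability requires $|\lambda(k)|<1$ for both roots and every $k>0$. For a real quadratic this is controlled by the three Jury conditions $P(1)>0$, $P(-1)>0$ and $\det(J)<1$. Writing out the first two, $1-\tr(J)+\det(J)>0$ and $1+\tr(J)+\det(J)>0$, and combining them gives exactly $1+\det(J)-|\tr(J)|>0$. Thus the hypothesis \eqref{eq:difference: linear stability: flat ground jury2} is precisely the statement that one of these two necessary conditions is strictly violated at some $k>0$.

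The remaining step is to convert this violation into an eigenvalue of modulus strictly greater than one. If $\tr(J)\ge0$ at the offending $k$, then \eqref{eq:difference: linear stability: flat ground jury2} yields $P(1)<0$; since $P(\lambda)\to+\infty$ as $\lambda\to+\infty$, the intermediate value theorem produces a real root in $(1,\infty)$. If instead $\tr(J)<0$, then $P(-1)<0$, and as $P(\lambda)\to+\infty$ when $\lambda\to-\infty$ there is a real root in $(-\infty,-1)$. Either way $J$ has a real eigenvalue with $|\lambda|>1$, so the steady state is unstable to spatially heterogeneous perturbations, as claimed.

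I do not expect a genuine obstacle here; the argument is elementary once the realness of $J$ is in place. The only point that needs care is the strictness in the final paragraph: because \eqref{eq:difference: linear stability: flat ground jury2} is a strict inequality, $P(\pm1)$ is strictly negative and the root lies strictly outside $[-1,1]$ rather than on its boundary. One may also note that a strict violation is only possible in the real-eigenvalue regime, since complex-conjugate eigenvalues would give $1+\det(J)-|\tr(J)|\ge(1-\sqrt{\det(J)})^2\ge0$, so the intermediate value argument is never vacuous.
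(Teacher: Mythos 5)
Your argument is correct and complete for the statement as written, but it is not the route the paper takes. You prove the sufficiency direction from first principles: since $a_1=0$ makes both kernels symmetric, $J$ is real, and you apply the intermediate value theorem to the characteristic polynomial $P(\lambda)=\lambda^2-\tr(J)\lambda+\det(J)$ to show that a strict violation of $1+\det(J)-|\tr(J)|>0$ forces a real eigenvalue outside $[-1,1]$; your closing observation that complex-conjugate eigenvalues can never produce such a violation correctly confirms the argument is never vacuous. The paper instead invokes the two-dimensional Jury criterion as a black box (instability occurs exactly when $\det(J)<1$ or $1+\det(J)-|\tr(J)|>0$ fails) and devotes essentially the entire proof to an explicit computation, under the scalings \eqref{eq:difference: limit scalings} with $B<2$ and $T<1/2$, showing that $\det(J)-1<0$ for all $k>0$. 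That computation is not needed for the bare sufficiency claim you prove, but it is what makes \eqref{eq:difference: linear stability: flat ground jury2} the \emph{only} mechanism for instability, and it is relied upon in Proposition \ref{prop: Difference: LinStab: diffusion threshold flat ground}, where the sign change of $1+\det(J)-\tr(J)$ is converted into the sharp diffusion threshold $d_c$. In short, your proof buys a self-contained, elementary justification of the Jury step; the paper's proof buys the complementary fact that no instability can arise through the determinant condition. If your version were to slot into the paper's development, the $\det(J)<1$ verification would still need to be supplied before Proposition \ref{prop: Difference: LinStab: diffusion threshold flat ground}.
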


%%%%%%%% A-d parameter plane study %%%%%%%%%%%%%%%%

In other words, Proposition \ref{prop: Difference: LinStab: sufficient condition for patterns on flat ground} provides a sufficient condition for spatial patterns to occur. The following proposition shows that \eqref{eq:difference: linear stability: flat ground jury2} is equivalent to the stability condition \eqref{eq: Models: Klausmeier: flat ground diffusion threshold} of $(\overline{u},\overline{w})$ in the local Klausmeier model. In other words, a diffusion driven instability causes the occurrence of spatial patterns in the integrodifference model, i.e. given a level of rainfall $A$, an instability occurs for $d>d_c(A,B)$, where $d_c(A,B)$ is given in \eqref{eq: Models: Klausmeier: flat ground diffusion threshold}. 

\begin{prop}\label{prop: Difference: LinStab: diffusion threshold flat ground}
	{\color{changes}The steady state $(\overline{u}, \overline{w})$ of the integrodifference model \eqref{eq: Klausmeier integrodifference} under the scalings \eqref{eq:difference: limit scalings} on flat ground with the Laplacian kernels \eqref{eq:difference equations: simulations: Laplacian} is unstable to spatially heterogeneous perturbations if $d>d_c(A,B)$, where the threshold $d_c$ is identical with the corresponding threshold \eqref{eq: Models: Klausmeier: flat ground diffusion threshold} for the local Klausmeier model.}
\end{prop}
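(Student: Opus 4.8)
The plan is to substitute the scalings \eqref{eq:difference: limit scalings} into the Jury quantity appearing in \eqref{eq:difference: linear stability: flat ground jury2} and show that it collapses, up to a strictly positive factor, onto the dispersion relation of the local Klausmeier model \eqref{eq: Intro Klausmeier local}. On flat ground ($a_1=0$) both kernels are symmetric Laplace kernels, and the scalings $\sigma_\phi^2 = 2T$, $\widetilde{\sigma}_{\phi_1}^2 = 2dT$ combined with the Laplace identities $\sigma_\phi^2 = 2/a^2$ and $\widetilde{\sigma}_{\phi_1}^2 = 2/a_2^2$ fix $a^2 = 1/T$ and $a_2^2 = 1/(dT)$, so that $\widehat{\phi}(k) = (1+Tk^2)^{-1}$ and $\widehat{\phi_1}(k) = (1+dTk^2)^{-1}$. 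With $C=D=T$ the entries of \eqref{eq:difference: linear stability Jacobian} simplify via \eqref{eq: difference: LinStab: linearisation coefficients} to $C\alpha = 1+BT$, $D\delta = 1-T(\beta+1)$ and $CD(\alpha\delta-\beta\gamma) = (1+BT)\bigl(1-T(\beta+1)\bigr) + 2BT^2\beta$, where $\beta = \overline{u}^2$.

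Next I would analyse the steady Jury branch, i.e. the eigenvalue crossing $+1$, which amounts to the sign of $1-\tr(J)+\det(J)$. Multiplying this quantity by the strictly positive factor $(1+Tk^2)(1+dTk^2)$ and expanding, I expect the $T^0$ and $T^1$ contributions to cancel and the remainder to factor as
\begin{align*}
\bigl(1+Tk^2\bigr)\bigl(1+dTk^2\bigr)\bigl(1-\tr(J)+\det(J)\bigr) = T^2\Bigl( dk^4 + (\beta+1-dB)k^2 + B(\beta-1)\Bigr).
\end{align*}
The bracket on the right is precisely $\det(M(k))$, the determinant of the dispersion matrix of the linearised local Klausmeier model \eqref{eq: Intro Klausmeier local} on flat ground, whose reaction Jacobian has entries $f_u=B$, $f_w=\beta$, $g_u=-2B$, $g_w=-\beta-1$ and diffusion coefficients $1$ and $d$. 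Crucially the entire $T$-dependence has factored out as $T^2>0$, so the sign of $1-\tr(J)+\det(J)$ agrees with that of $\det(M(k))$ for every $k$, independently of $T$. Since $d_c(A,B)$ in \eqref{eq: Models: Klausmeier: flat ground diffusion threshold} is by construction the value of $d$ at which $\min_{k>0}\det(M(k))$ first becomes negative (the homogeneous-stability conditions from Proposition \ref{prop: Difference: LinStab: stability homogeneous perturbations} guaranteeing $\det(M(0))=B(\beta-1)>0$), this already yields $1-\tr(J)+\det(J)<0$ for some $k>0$ exactly when $d>d_c(A,B)$.

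It then remains to relate this steady branch to the full condition \eqref{eq:difference: linear stability: flat ground jury2} containing $|\tr(J)|$. One implication is immediate: since $|\tr(J)|\ge\tr(J)$, we have $1+\det(J)-|\tr(J)|\le 1-\tr(J)+\det(J)$, so $1-\tr(J)+\det(J)<0$ forces $1+\det(J)-|\tr(J)|<0$. For the converse I must exclude the period-doubling branch, namely that \eqref{eq:difference: linear stability: flat ground jury2} is triggered by $1+\tr(J)+\det(J)<0$ for some $k$ while the steady branch stays nonnegative. Clearing denominators in the same way presents $(1+Tk^2)(1+dTk^2)\bigl(1+\tr(J)+\det(J)\bigr)$ as a quadratic in $k^2$ with coefficients $4+O(T)$, $2+2d+O(T)$ and $d$; I would show these remain strictly positive throughout the admissible region, using $T<\tfrac12$, $B<2$ and the bound on $A$ from Corollary \ref{cor: Difference: LinStab: parameter region providing steady state structure in limit case} (which bounds $\beta$), so that this branch is never active and \eqref{eq:difference: linear stability: flat ground jury2} is governed solely by the steady branch.

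The main obstacle is this last step: controlling the signs of the $1+\tr(J)+\det(J)$ coefficients uniformly over the parameter region, thereby ruling out period doubling and ensuring that the instability detected by \eqref{eq:difference: linear stability: flat ground jury2} is exactly the diffusion-driven (Turing) instability of the PDE. The cancellations producing the clean factor $T^2\det(M(k))$ are the heart of the argument, but they are a direct, if lengthy, algebraic verification relying on the steady-state identities $\overline{u}\,\overline{w}=B$ and $B\overline{u}+\overline{w}=A$; the genuinely non-routine point is the uniform sign control, which is what upgrades the one-sided implication to the full equivalence with $d>d_c(A,B)$.
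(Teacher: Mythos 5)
Your proposal is correct and follows the same overall strategy as the paper's proof --- reduce the Jury condition \eqref{eq:difference: linear stability: flat ground jury2} to the ``steady branch'' $1-\tr(J)+\det(J)<0$, show that this branch is independent of $T$, and recover the PDE threshold $d_c$ --- but it improves on the presentation in one place and substitutes a different (and unfinished) lemma in another. Your key identity
\begin{align*}
\bigl(1+Tk^2\bigr)\bigl(1+dTk^2\bigr)\bigl(1-\tr(J)+\det(J)\bigr) \;=\; \bigl(p-a'\bigr)\bigl(q-d'\bigr)+2BT^2\beta \;=\; T^2\det\bigl(M(k)\bigr),
\end{align*}
with $p-a'=T(k^2-B)$ and $q-d'=T(dk^2+\beta+1)$, checks out exactly; it is cleaner than the paper's route, which clears denominators into an explicit quartic $\gamma_4k^4+\gamma_2k^2+\gamma_0$ and then re-derives $d_c$ from the sign of its minimum rather than recognising the quartic as a positive multiple of the PDE dispersion determinant. (Your factorisation in fact exposes what appears to be a typo in the paper's stated $\gamma_0$: the constant term should be a positive multiple of $B(\beta-1)=2B\sqrt{A^2-4B^2}/(A-\sqrt{A^2-4B^2})$, i.e.\ $B\sqrt{A^2-4B^2}\,(A-\sqrt{A^2-4B^2})$ after their normalisation, not $B(A^2-4B^2)$; the final formula for $d_c$ is nevertheless consistent with your version.) Where you diverge is the treatment of the absolute value: the paper proves the lemma $\tr(J)>0$ for all $k>0$ (using the bound $A<A_+$ from Corollary \ref{cor: Difference: LinStab: parameter region providing steady state structure in limit case} and the preliminary observation $d>d_c\ge 2/B>1$), so that $|\tr(J)|=\tr(J)$ identically and the Jury condition \emph{is} the steady branch. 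You instead use $|\tr(J)|\ge\tr(J)$ to get the implication $d>d_c\Rightarrow$ instability --- which is all the literal statement of the proposition requires, and is a genuinely shorter route to it --- and defer the exclusion of the period-doubling branch (positivity of $1+\tr(J)+\det(J)$) to establish sharpness of the threshold. That deferred step is the exact analogue of the paper's $\tr(J)>0$ computation, requires the same ingredients ($T<\tfrac12$, $B<2$, $d>1$ and the bound on $A$), and is comparably routine; but as written you have only sketched it, so if you intend to claim that $d_c$ is the exact transition point rather than merely a sufficient condition, that sign control still has to be carried out.
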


%%%%%%%%%%% time step effect %%%%%%%%%%%%%%%%%%%%%

The condition's independence of $T$ yields that the integrodifference model does not provide any information on the effects of the temporal separation of seed dispersal events on the onset of spatial patterns. The equivalence of the condition to that of the local Klausmeier model follows directly from the condition's independence of $T$ and Proposition \ref{prop: Difference: limit case}, which shows that the integrodifference model converges to the local Klausmeier model as $T\rightarrow 0^+$. Thus for sufficiently small values of $T$, Proposition \ref{prop: Difference: LinStab: diffusion threshold flat ground} does indeed provide the exact same information as the diffusion threshold obtained for the local Klausmeier model. For larger $T$ the model does not provide any information on the transition between uniform and patterned vegetation as the decrease in the upper bound $A_+$ on the rainfall parameter reduces the size of the rainfall interval for which the derivation of $d_c$ is valid.

\begin{proof}[Proof of Proposition \ref{prop: Difference: LinStab: stability homogeneous perturbations}]
	Stability of a steady state $(\overline{u},\overline{w})$ is determined by the Jury conditions applied to the Jacobian
	\begin{align*}
	J = \mattwo{C(2\overline{u}\overline{w}-B)+1}{C\overline{u}^2}{-2D\overline{u}\overline{w}}{-D(\overline{u}^2+1)+1}.
	\end{align*}
	The steady state $(\overline{u}_3,\overline{w}_3)$ is unstable in the whole parameter region, because 
	\begin{align*}
	1+\det(J)-|\tr(J)| = -\frac{2BCD\left(A^2+A\sqrt{A^2-4B^2}-4B^2 \right)}{\left(A+\sqrt{A^2-4B^2}\right)^2} <0.
	\end{align*}
	The desert steady state $(\overline{u}_1,\overline{w}_1)$ is monotonically stable if $C<B^{-1}$ and $D<1$. If $1<D<2$ or $B^{-1}<C<2B^{-1}$ it is still stable but solutions are oscillating about $(0,A)$, which is biologically impossible. Finally, the Jury conditions yield that $(\overline{u}_2,\overline{w}_2)$ is stable to spatially homogeneous perturbations if $\min\{\overline{C_2}, \overline{C_3}\} < C < \overline{C_1}$, where
	\begin{align*}
	\overline{C_1} &= \frac{AD\left(A-\sqrt{A^2-4B^2}\right)}{B\left((D-1)A\left(\sqrt{A^2-4B^2}-A\right)+2B^2(2D-1) \right)}, \\
	\overline{C_2} &= \frac{2\left((D-2)\left(A\sqrt{A^2-4B^2}-A^2\right)-4B^2\right)}{B\left((D-2)\left(A^2-A\sqrt{A^2-4B^2} \right)-4B^2(D-1) \right)}, \\
	\overline{C_3} &= \frac{(D-2)\left(A^2-A\sqrt{A^2-4B^2}\right)+4B^2}{B\left(A^2-A\sqrt{A^2-4B^2}-2B^2\right)}.
	\end{align*}
	Combined, this gives that the steady state structure of the continuous time model is preserved if
	\begin{align}\label{eq:difference equations: time only stability region}
	D<1 \quad \text{and} \quad \max\left\{0,\min\left\{\overline{C_2},\overline{C_3} \right\} \right\} < C < \min\left\{\frac{1}{B}, \overline{C_1} \right\}.
	\end{align}

	If $D > 1/2$, then $\min\left\{1/B, \overline{C_1} \right\} =  1/B$, because
	\begin{align*}
	\overline{C_1} - \frac{1}{B} = -\frac{2\left(D-\frac{1}{2}\right)\left(A^2-A\sqrt{A^2-4B^2}-2B^2\right)}{B\left(\left(D-\frac{1}{2} \right)\left(A^2-A\sqrt{A^2-4B^2}-4B^2\right)-\left(A^2-A\sqrt{A^2-4B^2}\right)\right)}>0,
	\end{align*}
	since $A^2-A\sqrt{A^2-4B^2}-2B^2>0$ and $A^2-A\sqrt{A^2-4B^2}-4B^2<0$.
	Similarly, if $D < 1/2$, then $\min\left\{1/B, \overline{C_1} \right\} = \overline{C_1}$. Further, if $D < \overline{D}$ (defined in \eqref{eq:difference equations: Dbar}), then $\max\{0,\min\{\overline{C_2},\overline{C_3} \} \} = 0$ and similarly, if $D > \overline{D}$, then $\max\{0,\min\{\overline{C_2},\overline{C_3} \} \} = \overline{C_2}$.
	
	Hence, \eqref{eq:difference equations: time only stability region} can be simplified by splitting it into different parameter regions. It becomes (i) $C<\overline{C_1}$ if $D<1/2$ and $D<\overline{D}$, (ii) $\overline{C_2}<C<\overline{C_1}$ if $D<1/2$ and $\overline{D}<D<1$, (iii) $C<1/B$ if $1/2<D<1$ and $D<\overline{D}$ and (iv) $\overline{C_2}<C<1/B$ if  $1/2<D<1$ and $\overline{D}<D<1$. 
	%	\begin{table}[ht]
	%		\centering
	%		\begin{tabular}{|Sc||Sc|Sc|}\hline
	%			& $\begin{aligned}D < \overline{D}\end{aligned}$ & $\begin{aligned}\overline{D}< D < 1\end{aligned}$ \\ \hhline{|=#=|=|}
	%			$\begin{aligned}D < \frac{1}{2}\end{aligned}$ & $\begin{aligned}C<\overline{C_1} \end{aligned}$ (i) & $\begin{aligned}\overline{C_2}<C<\overline{C_1}\end{aligned}$ (ii) \\ \hline
	%			$\begin{aligned}\frac{1}{2}<D <1\end{aligned}$ & $\begin{aligned}C< \frac{1}{B}\end{aligned}$ (iii) & $\begin{aligned}\overline{C_2}<C< \frac{1}{B}\end{aligned}$  (iv) \\ \hline
	%		\end{tabular}
	%		\caption{Overview of \eqref{eq:difference equations: time only stability region} for different values of $D$. The cells left to and below the double lines show the simplification of the condition on $C$ given in \eqref{eq:difference equations: time only stability region}, in the respective parameter regions indicated by the first row and first column of the table.}\label{tab:difference equations: stability region table}
	%	\end{table}
	This classification is used below to show that if $C$ and $D$ are defined as in \eqref{eq:difference equations: time only stability region for all A,B}, then \eqref{eq:difference equations: time only stability region} is satisfied in the whole parameter plane that is considered in the continuous-time PDE models ($A>2B$, $B<2$). To show this it is sufficient to show that (i) and (iii) are satisfied because $\ell<1$. For case (iii) note that
	\begin{align*}
	C=\frac{\ell_1D}{B(m-\ell_1D)} <\frac{1}{B} \iff \ell_1D<\frac{m}{2},
	\end{align*}
	which is satisfied since $\ell_1D<1$ and $m>2$. For case (i) note that
	\begin{align*}
	C=\frac{\ell_1D}{B(m-\ell_1D)} < \overline{C_1} \iff \ell_1D < \frac{2B^2+(m-1)\left(A^2-A\sqrt{A^2+4B^2}\right)}{4B^2} :=\overline{\overline{D}}.
	\end{align*}
	This is always satisfied because $\ell_1D<1$ and
	\begin{align*}
	\overline{\overline{D}} >1 \iff m>\frac{2B^2}{A^2-A\sqrt{A^2+4B^2}} +1 :=\overline{m},
	\end{align*}
	which holds true since $m>2$ and 
	\begin{align*}
	\overline{m} <2 \iff A^2-A\sqrt{A^2-2B^2}-2B^2 > 0,
	\end{align*}
	which is clearly satisfied.
	
\end{proof}

\begin{proof}[Proof of Proposition \ref{prop: Difference: LinStab: stability heterogeneous perturbations general}]
	
	Linearisation of the model \eqref{eq: Klausmeier integrodifference} about the steady state $(\overline{u}, \overline{w})$ gives $u_{n+1}(x) = C  \phi(\cdot) (\alpha u_n(\cdot) + \beta w_n(\cdot))$ and $w_{n+1}(x) = D  \phi_1(\cdot)(\gamma u_n(\cdot) +  \delta w_n(\cdot))$. Taking the Fourier transform of both equations yields $\widehat{u}_{n+1}(k) = C \widehat{\phi}(k) (\alpha\widehat{u}_n(k) + \beta \widehat{w}_n(k))$ and $\widehat{w}_{n+1}(k) = D \widehat{\phi_1}(k)(\gamma \widehat{u}_n(k) + \delta \widehat{w}_n(k))$, where $\widehat{\phi}$ and $\widehat{\phi_1}$ denote the Fourier transforms of the kernels $\phi$, and $\phi_1$, respectively. Under the assumption that $\widehat{u}_n(k)$ and $\widehat{w}_n(k)$ are proportional to $\lambda^n\tilde{u}(k)$ and $\lambda^n\tilde{w}(k)$, respectively, where $\lambda \in \C$ denotes the growth rate, the system becomes $\lambda \tilde{u}(k) = C \widehat{\phi}(k) (\alpha\tilde{u}(k) + \beta \tilde{w}(k))$ and $\lambda\tilde{w}(k) = D \widehat{\phi_1}(k)(\gamma \tilde{u}(k) + \delta \tilde{w}(k))$, i.e. $\lambda$ is an eigenvalue of the Jacobian $J$.
\end{proof}

\begin{proof}[Proof of Proposition \ref{prop: Difference: LinStab: sufficient condition for patterns on flat ground}]
	For an instability to occur, at least on of the Jury conditions $\det (J) <1$ and $1+\det (J) - |\tr (J)|>0$ needs to be violated for some wavenumber $k>0$. The former condition is satisfied for all $k>0$. To show this, note that that  $\max \{\det(J) -1\}$ is at $k=0$ because
	\begin{align}\label{eq:difference: linear stability: flat ground jury 1 det - 1}
	\det(J) - 1 = \frac{\alpha_4 k^4 + \alpha_2 k^2 + \alpha_0}{ \left(d T k^2+1\right)\left(A-\sqrt{A^2-4B^2}\right)^2\left(T k^2 +1\right)},
	\end{align}
	where
	\begin{align*}
	\alpha_4 &= 2d T^2 \left(-A^2+A\sqrt{A^2-4B^2}+2B^2\right), \\
	\alpha_2 &= -2T\left(A^2-A\sqrt{A^2-4B^2}-2B^2\right)(d+1), \\
	\alpha_0 &= 2T\left(\left(\frac{1}{2}B-1\right)\left(A^2-A\sqrt{A^2-4B^2}\right) \right. \\ &\left. + \left(\frac{1}{2}B-TB\right)\left(A^2-A\sqrt{A^2-4B^2}-4B^2\right) \right).
	\end{align*}
	The denominator of \eqref{eq:difference: linear stability: flat ground jury 1 det - 1} is clearly positive and increasing for $k>0$. Since further $\alpha_4 <0$ and $\alpha_2<0$, the numerator and thus the whole of \eqref{eq:difference: linear stability: flat ground jury 1 det - 1} is decreasing for $k>0$ and it attains its maximum at $k=0$. The negativity of  \eqref{eq:difference: linear stability: flat ground jury 1 det - 1} then follows from that of $\alpha_0$ which follows from $B<2$ and $T<1/2$.
	%	Next, we turn to the second Jury condition. First note that
	%	\begin{align*}
	%	\tr(J) = \frac{\beta_2k^2+\beta_0}{ \left(d T k^2+1\right)\left(A-\sqrt{A^2-4B^2}\right)^2\left(T k^2 +1\right)} >0,
	%	\end{align*} 
	%	for all $k>0$, where
	%	\begin{align*}
	%	\beta_2 &= 2\left(A^2-A\sqrt{A^2-4B^2}-2B^2\right)\left(BT^2d+T+Td\right)-2T^2\left(A^2-\sqrt{A^2-4B^2}\right), \\ 
	%	\beta_0 &= 2\left(BT-T+2\right) \left(A^2-A\sqrt{A^2-4B^2}-2B^2\right) , \\ 
	%	\end{align*}
	%	The denominator is clearly positive. A sufficient condition for positivity of the numerator is
	%	\begin{align*}
	%	T < 1-\frac{2B^2}{\left(A^2-A\sqrt{A^2-4B^2}\right)} =:\overline{T},
	%	\end{align*}
	%	which is satisfied for all $T <1/2$, since $\overline{T} >1/2$.
	%	Hence \eqref{eq:difference: linear stability: flat ground jury2} is the sufficient condition for spatial patterns to occur.
\end{proof}

%\begin{proof}[Proof of Proposition \ref{prop: Difference: LinStab: flat ground special case A = Amin}]
%	If $A=2B$, the coefficients \eqref{eq: difference: LinStab: linearisation coefficients} of the Jacobian \eqref{eq:difference: linear stability Jacobian} are $\alpha  =(BT+1)/T$, $\beta =1$, $\gamma = -2B$, and $\delta =-(4T-1)/(2T)$. Thus the stability condition \eqref{eq:difference: linear stability: flat ground jury2} is satisfied if $k^2< (Bd-2)/d$, for some $k>0$. Thus an instability occurs if $d>2/B$.
%\end{proof}

\begin{proof}[Proof of Proposition \ref{prop: Difference: LinStab: diffusion threshold flat ground}]
	{\color{changes}Firstly, we note that $\partial d_c / \partial A \ge 0$ for all $A\ge 2B$. Hence, $d_c$ attains its minimum on $A=2B$, on which it simplifies to $d_c = 2/B$. Since $B<2$, $d_c>1$.
		Next, we} show that $\tr(J)>0$. To {\color{changes}do} this, note that
	\begin{align*}
	\tr(J) = \frac{\beta_2k^2+\beta_0}{ \left(d T k^2+1\right)\left(A-\sqrt{A^2-4B^2}\right)^2\left(T k^2 +1\right)} >0,
	\end{align*} 
	for all $k>0$, where
	\begin{align*}
	\beta_2 &= 2\left(A^2-A\sqrt{A^2-4B^2}-2B^2\right)\left(BT^2d+T+Td\right)-2T^2\left(A^2-A\sqrt{A^2-4B^2}\right), \\ 
	\beta_0 &= 2\left(BT-T+2\right) \left(A^2-A\sqrt{A^2-4B^2}-2B^2\right). 
	\end{align*}
	The denominator is clearly positive and thus the condition for positivity of $\tr(J)$ is $\beta_2k^2+\beta_0>0$. The left hand side of this is decreasing in $A$ since $A^2-A\sqrt{A^2-4B^2}$ is decreasing in $A$ and the assumptions on $B$ and $d$, and thus obtains its minimum at $A=A^+$, where $A^+$ is given in \eqref{eq: Difference: LinStab: parameter region limit case}. If $B<1/(2-T)$, then $A^+ = 4B^2/((2-T)T)$ and 
	\begin{align*}
	\tr(J)\left(\sqrt{A^+}\right) >0 \Longleftrightarrow k^2 > \frac{B}{1-d-BTd},
	\end{align*}
	{\color{changes} since} $d>1$. The right hand side is negative and thus $\min(\tr(J)) >0$ for $B<1/(2-T)$. If $B>1/(2-T)$, then $A^+ = (BT+1)^2B/T$ and 
	\begin{align*}
	\tr(J)\left(\sqrt{A^+}\right) >0 \Longleftrightarrow k^2 > -\frac{TB^2+(2-T)B-1}{B^2T^2d+\left((d+1)T-T^2\right)B-T},
	\end{align*}
	{\color{changes} since} $d>1$. Negativity of the right hand side follows from the lower bound on $B$ and thus  $\min(\tr(J)) >0$ for all $B<2$. This shows that $\tr(J)>0$. The stability condition \eqref{eq:difference: linear stability: flat ground jury2} thus becomes $1+\det(J)-\tr(J) < 0 \Longleftrightarrow \gamma_4k^4+\gamma_2k^2+\gamma_0 <0$, where $\gamma_4 = d(A^2-A\sqrt{A^2-4B^2}-2B^2)$, $\gamma_2 = (A^2-A\sqrt{A^2-4B^2})(1-Bd)+2B^3d$ and	$\gamma_0 = B(A^2-4B^2)$. This condition and thus its minimum $-\gamma_2^2/(4\gamma_4)+\gamma_0$ is independent of $T$. Determining the locus at which the minimum changes sign gives the threshold $d_c(A,B)$.
\end{proof}

%%%%%%%%%%%%%%%%%%%%%%%%%%%%%%%%%%%%%%%%%%%%%%%%%%%%%%%%%%%%%%%%%%%%%%%%%%%%%%%%%%%%%%%%%%%%%%%%%%%%%%%%%%%%%%%%%%%%%%%%%%%
%%%%%%%%%%%%%%%%%%%%%%%%%%%%%%             SIMULATIONS             %%%%%%%%%%%%%%%%%%%%%%%%%%%%%%%%%%%%%%%%%%%%%%%%%%%%%%%%
%%%%%%%%%%%%%%%%%%%%%%%%%%%%%%%%%%%%%%%%%%%%%%%%%%%%%%%%%%%%%%%%%%%%%%%%%%%%%%%%%%%%%%%%%%%%%%%%%%%%%%%%%%%%%%%%%%%%%%%%%%% 

\section{Simulations}\label{sec:difference: simulations}
The preceding linear stability analysis relies on the use of the Laplace kernel. For other kernel functions whose Fourier transforms do not provide such a simplification numerical simulations of the model are considered to investigate the onset of patterns. In particular, this allows us to make comparisons between different dispersal kernels, similar to the analysis performed for the nonlocal model in \cite{Eigentler2018nonlocalKlausmeier}. These show that both wide plant dispersal kernels and narrow water dispersal kernels inhibit the formation of patterns. Finally in this section, we show that as for the nonlocal Klausmeier model, the kind of decay of the plant dispersal kernel at infinity is also important.

Simulations are performed on the space domain $[-x_{\max}, x_{\max}]$ centred at $x=0$. This domain is discretised into $M$ equidistant points $x_1,\dots,x_M$ with $-x_{\max} = x_1 < x_2 < \dots < x_M = x_{\max}$ such that $\Delta x = x_2-x_1 = \dots =x_M - x_{M-1}$. On flat ground \eqref{eq: Klausmeier integrodifference} then becomes
\begin{subequations} \label{difference equations: simulations: discretised system}
	\begin{align}
	u_{n+1}(x_k) &= C \Delta x \left(\phi \ast f_n \right)_k, \label{difference equations: simulations: discretised system plants} \\
	w_{n+1}(x_k) &= D\Delta x \left(\phi_1 \ast g_n \right)_k, \label{difference equations: simulations: discretised system water}
	\end{align}
\end{subequations}
where $\phi, \phi_1$ denote the vectors consisting of the elements obtained by evaluating the corresponding function at each mesh point, $f_n, g_n$ denote the vectors consisting of the elements obtained by evaluating the corresponding function at each $(u_n(x_k), w_n(x_k))$ and $z_1 \ast z_2$ denotes the discrete convolution of two vectors $z_1$ and $z_2$. The convolution terms in \eqref{difference equations: simulations: discretised system plants} and \eqref{difference equations: simulations: discretised system water} are obtained by using the convolution theorem and the fast Fourier transform, providing a significant simplification as this reduces the number of operations required to obtain the convolution from $O(M^2)$ to $O(M\log(M))$ (see e.g. \cite{Cooley1969}).

To mimic the infinite domain used for the linear stability analysis (Section \ref{sec: integrodifference: linstab}), we  define the initial condition of the system as follows; on a subdomain $[-x_{\operatorname{sub}}, x_{\operatorname{sub}} ]$ centred at $x=0$ of the domain $[-x_{\max}, x_{\max}]$ considered in the simulation the initial condition is a random perturbation of the steady state $(\overline{u}, \overline{w})$, while on the rest of the domain the densities are initially set to equal the densities of the steady state $(\overline{u}, \overline{w})$. In other words, $u_0(x_k) = \overline{u} + \delta(x_k)$ and $w_0(x_k) = \overline{w} + \varepsilon(x_k)$ for $x_k \in [-x_{\operatorname{sub}}, x_{\operatorname{sub}} ]$, where $\|\delta\|_\infty < 0.1\overline{u} $ and $\|\varepsilon\|_\infty < 0.1\overline{w}$ and $u_0(x_k) = \overline{u}$ and $w_0(x_k) = \overline{w}$ for $x_k \notin [-x_{\operatorname{sub}}, x_{\operatorname{sub}} ]$. The size of the outer domain is chosen large enough so that any boundary conditions (which are set to be periodic) that are imposed on $[-x_{\max},x_{\max}]$ do not affect the solution in the subdomain in the finite time that is considered in the simulation. Figure \ref{fig:difference equations: simulations: example simulation flat} shows a typical patterned solution obtained by these simulations.

Based on the amplitude of the oscillation relative to the steady state of the solutions obtained by the simulations we set up a scheme to determine the critical rainfall level $A_{\max}$ below which pattern onset occurs. Doing this allows us to investigate how certain changes of parameters and kernel functions affect the onset of patterns. Due to the random perturbation of the initial state of the system, all simulation results shown below are the averages taken over 100 simulations. For the symmetric dispersal kernels $\phi$ and $\phi_1$ we consider the Laplacian \eqref{eq:difference equations: simulations: Laplacian}, the Gaussian
\begin{align}\label{eq:difference equations: simulations: Gaussian}
\phi_g(x) = \frac{a_g}{\sqrt{\pi}}e^{-a_g^2x^2}, \quad a>0, x\in\R,
\end{align}
and the power law distribution
\begin{align}\label{eq:difference equations: simulations: Power Law}
\phi_p(x) = \frac{(b-1)a_p}{2\left(1+a_p|x|\right)^b}, \quad a>0, b >3, x\in\R.
\end{align}

\begin{figure}
	\begin{minipage}{0.49\textwidth}
		\centering
		\centering
		\includegraphics[width=0.9\textwidth]{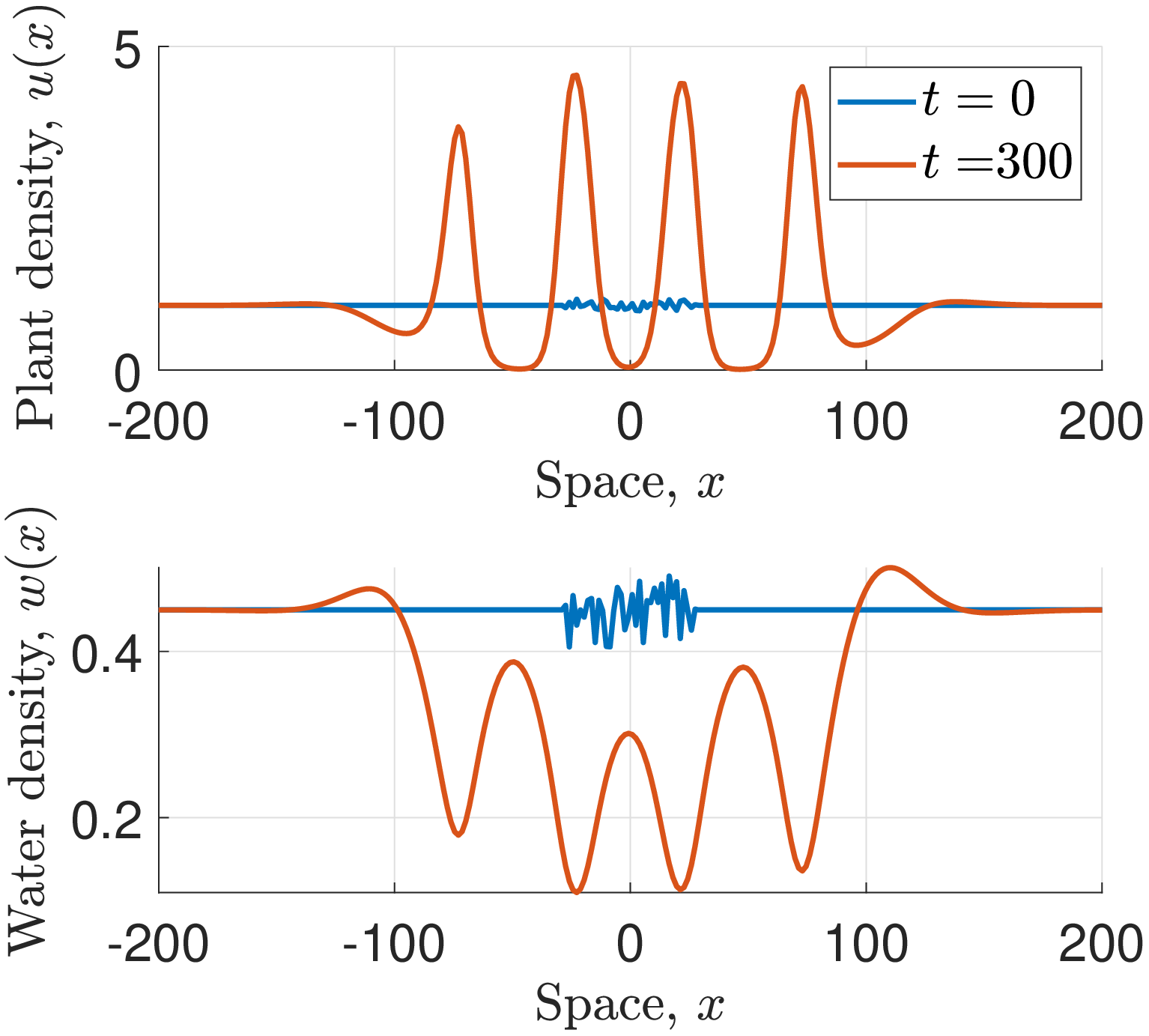}
		\caption{{\color{changes}Simulation of the integrodifference model. This figure shows a patterned solution obtained by simulating the integrodifference model on flat ground. The kernels used in these simulations are the symmetric Laplacian kernels, respectively. The parameter setting \eqref{eq: Difference: LinStab: parameter region limit case} with $T=0.1$ is used in the simulation. The other parameters are $A=0.9$, $B=0.45$ and $d=500$.}}\label{fig:difference equations: simulations: example simulation flat}
	\end{minipage}
	\begin{minipage}{0.01\textwidth}
		\
	\end{minipage}
	\begin{minipage}{0.49\textwidth}
		\centering
		\includegraphics[width=0.9\textwidth]{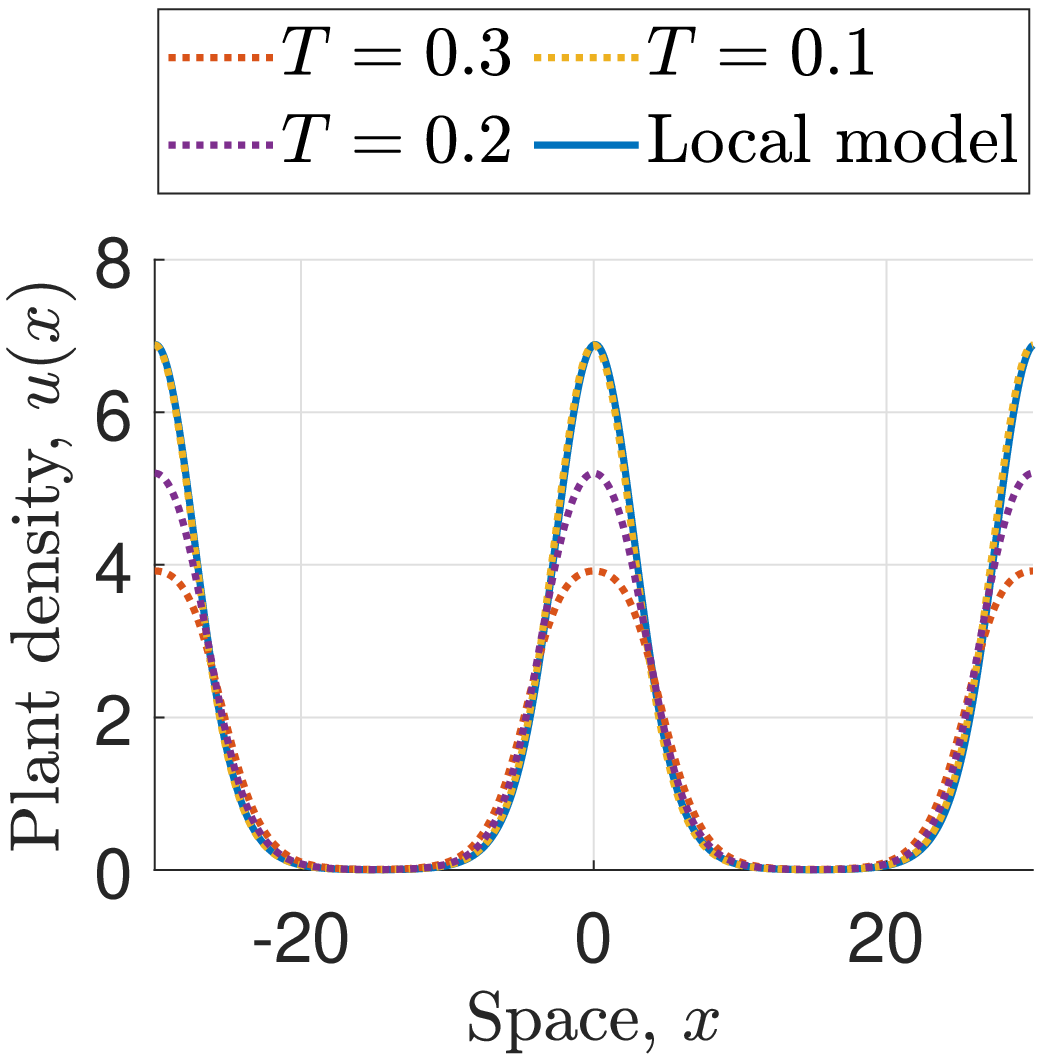}
		\caption{{\color{changes}Convergence of solutions. This figure visualises the convergence of solutions to the local PDE model \eqref{eq: Intro Klausmeier local} as $T\rightarrow 0^+$, to complement the consistency result presented in Prop. \ref{prop: Difference: limit case}. Solutions of the integrodifference model \eqref{eq: Klausmeier integrodifference} are shown for $T=0.3$, $T=0.2$ and $T=0.1$ and are compared with the solution of the local Klausmeier PDE model \eqref{eq: Intro Klausmeier local}. Note that unlike in Fig. \ref{fig:difference equations: simulations: example simulation flat}, the spatial domain is chosen to be small to impose the same wavelength restrictions on both models to aid the visualisation of the convergence. }}\label{fig:difference equations: simulations: convergence simulation}
		
	\end{minipage}
\end{figure}

We base our comparison on the kernels' standard deviations, which are given by $\sigma_\phi = \sqrt{2}/a$ for the Laplacian kernel \eqref{eq:difference equations: simulations: Laplacian}, $\sigma_{\phi_g} = 1/(\sqrt{2}\,a_g)$ for the Gaussian kernel \eqref{eq:difference equations: simulations: Gaussian} and $\sigma_{\phi_p} = \sqrt{2}/(\sqrt{b^2-5b+6}\,a_p)$ for the power law kernel \eqref{eq:difference equations: simulations: Power Law} provided $b>3$. It is perfectly reasonable to perform simulations with kernels of infinite standard deviation (e.g. $b<3$ in the power law kernel) but in the interest of comparing results for the kernels based on their standard deviation we consider only $b=3.1$ and $b=4$.

%%%%%%%%%% Variation in kernels at same time %%%%%%%%%%%%%%%%%%%%%%%%%%%%%%%%%%
To investigate the model's behaviour under changes to the dispersal kernels $\phi$ and $\phi_1$, we start by considering simultaneous changes in the kernel functions $\phi$, and $\phi_1$. The comparison between the kernel functions is based on the standard deviation of the plant dispersal kernel $\phi$ and the width of the water dispersal kernel $\phi_1$ is set to $a_2 = 0.1a$ to obtain a ratio similar to that of the standard deviations under the scalings \eqref{eq:difference: limit scalings}, which corresponds to the large value of the diffusion parameter $d$ in the PDE and integro-PDE models. Figure \ref{fig:difference equations: simulations: Amax all kernels flat} visualises the simulation results, which show that for small standard deviations, the rainfall threshold $A_{\max}$ is close to its lower bound, before an increase in the kernel width causes it to peak before slowly decreasing as the kernel widths are further increased. For very narrow dispersal kernels very little spatial interaction takes place. In particular, as $\sigma \rightarrow 0$, the kernel functions tend to the delta function $\delta(x)$ centred at $0$ and therefore the integrodifference system \eqref{eq: Klausmeier integrodifference} becomes
\begin{align*}
u_{n+1}(x) &= u_n(x)+C\left(u_n(x)^2w_n(x)-Bu_n(x)\right), \\
w_{n+1}(x) &= w_n(x)+D\left(A-u_n(x)^2w_n(x)-w_n(x)\right).
\end{align*} 
For this system, the steady state $(\overline{u_2},\overline{w_2})$, which was randomly perturbed to set the initial condition of the system in the simulation, is always stable. Therefore, no patterns exist and $A_{\max} = 2B$ is the minimum value of the rainfall parameter for which vegetation is growing uniformly, recalling that for $A<2B$, the steady state $(\overline{u_2},\overline{w_2})$ does not exist. Further, away from $\sigma = 0$, a change in kernel width only has very little effect on $A_{\max}$, an indication that an increase to the width of the plant dispersal kernel has the opposite effect on the tendency to form patterns as an increase to the width of the water dispersal kernel.

\begin{figure}
	\centering
	\includegraphics[width=0.5\textwidth]{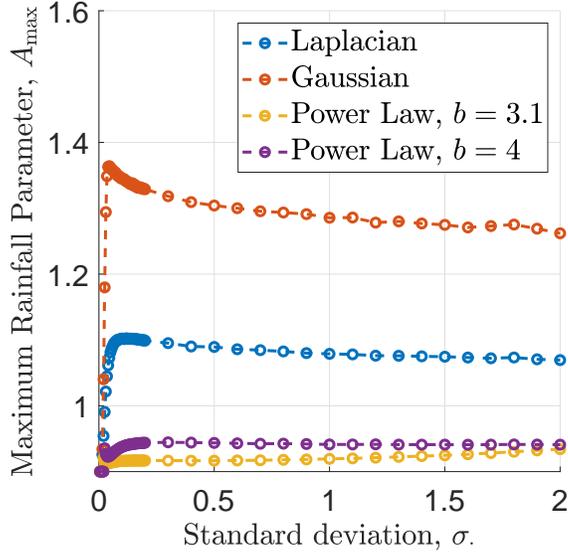}
	\caption{The maximum rainfall parameter $A_{\max}$ under simultaneous changes of the dispersal kernels. This figure visualises variations of $A_{\max}$ against simultaneous variations of both kernel functions. The standard deviation on the abscissa refers to the plant dispersal kernel $\phi$, the width of the water dispersal kernel $\phi_1$ is set to $a_2=0.1a$. The rainfall threshold is determined up to an interval of length $10^{-4}$ for $\sigma_\phi = \{0.01, 0.02, \dots, 0.05, 0.1, 0.2, \dots, 2\}$. The parameter values used for this simulation are $B=0.45$, $\ell=\ell_1=0.5$, $m=5$ }\label{fig:difference equations: simulations: Amax all kernels flat}
\end{figure}

%%%%%%%%%%%% Change in plant kernels only %%%%%%%%%%%%%%%%%%%%%%%%%%%%%%%%%%%%%%%%%
To test this hypothesis, we investigate changes in the system's behaviour as individual kernel functions are changed. First, we consider how the critical rainfall parameter $A_{\max}$ is affected by a change of the shape of the dispersal kernel $\phi$ in the plant equation \eqref{difference equations: simulations: discretised system plants}. The result (see Figure \ref{fig:difference equations: simulations: Amax plant kernels flat}) is consistent with results of the integro-PDE model \cite{Eigentler2018nonlocalKlausmeier} on sloped ground. Firstly, an increase in the width of the plant dispersal kernels reduces the size of the parameter region supporting pattern onset, where changes for larger values of the standard deviation $\sigma_\phi$ are much smaller than close to $\sigma_\phi=0$. Identical to the nonlocal Klausmeier model, a trend that for small standard deviations those kernel functions that decay algebraically at infinity predict a lower value of $A_{\max}$ than those decaying exponentially, and vice versa for larger kernel widths, is also observed in these simulations.

\begin{figure}
	\centering
	\subfloat[Changes to plant dispersal kernel only\label{fig:difference equations: simulations: Amax plant kernels flat}]{\includegraphics[width=0.48\textwidth]{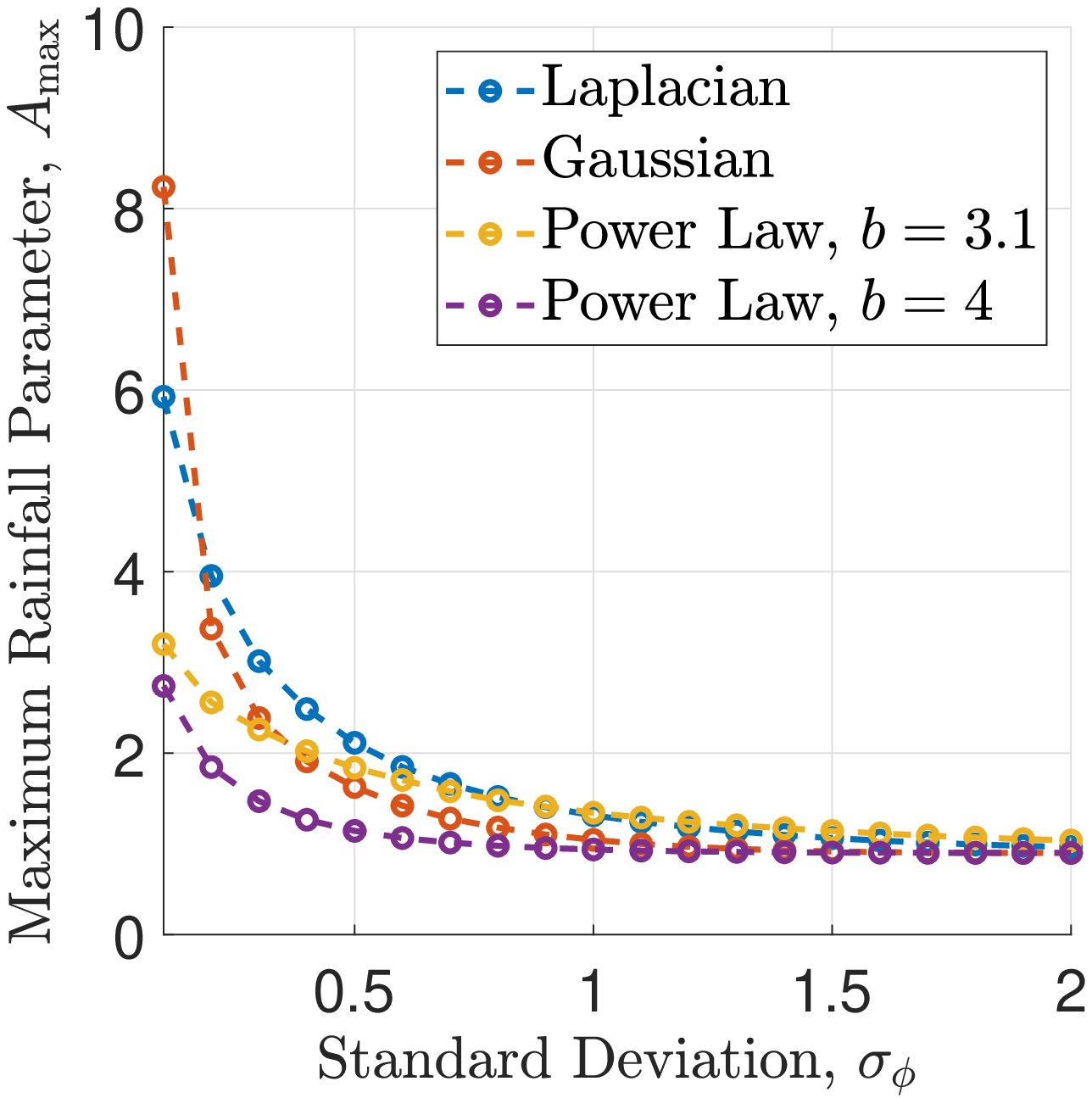}}
	\subfloat[Changes to water dispersal kernel only\label{fig:difference equations: simulations: Amax water kernels symmetric flat}]{\includegraphics[width=0.48\textwidth]{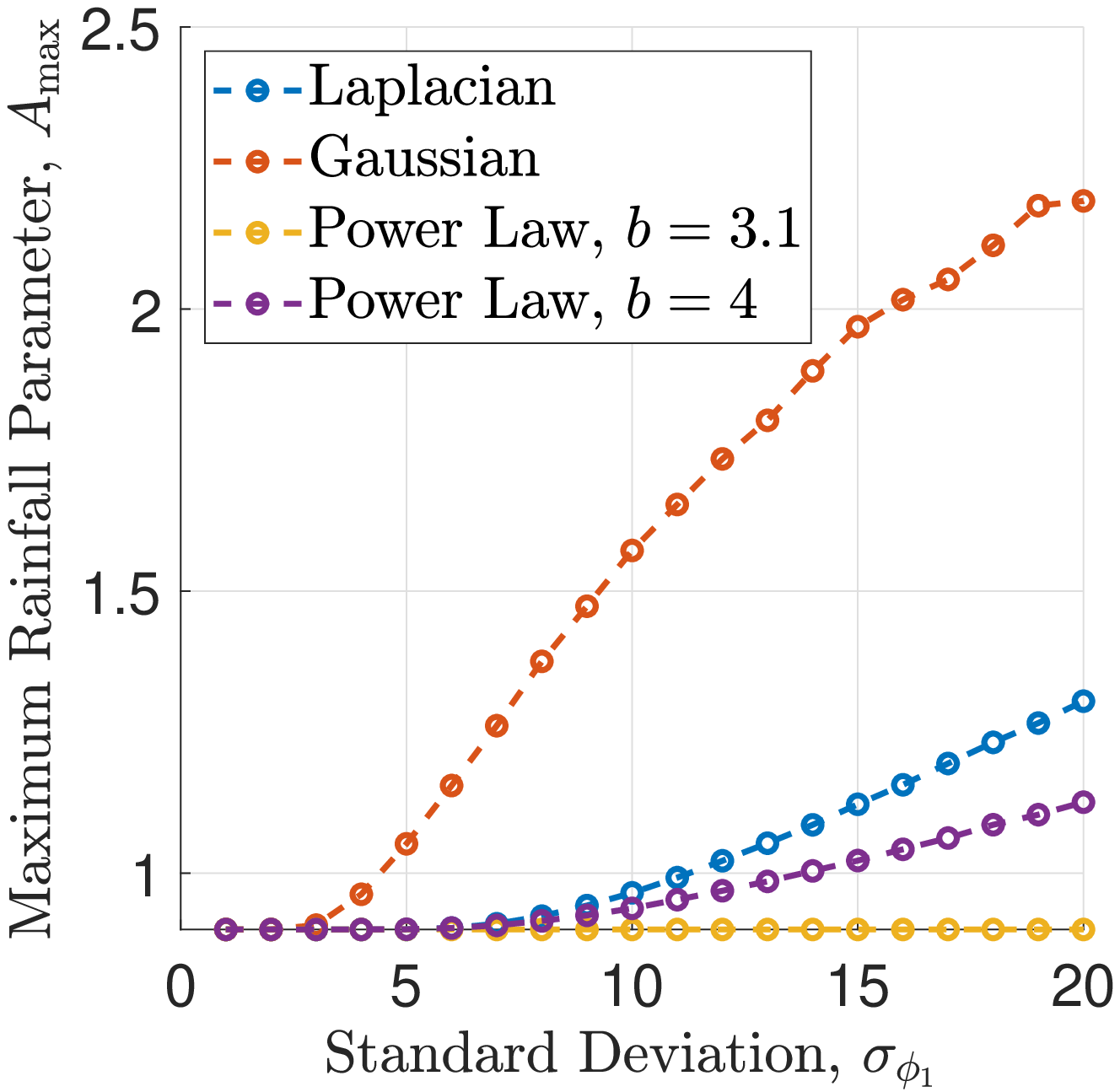}}
	\caption{The maximum rainfall parameter $A_{\max}$ under separate variations of the dispersal kernels.. Part (a) shows $A_{\max}$ up to an interval of length $10^{-4}$ with varying width ($\sigma_\phi = \{0.05, 0.1, 0.2, \dots, 2 \}$) and shape of the plant dispersal kernel $\phi$, while (b) visualises the effects of changes in the water dispersal kernel $\phi_1$. The latter was simulated for a larger range of the kernel's standard deviation $\sigma_{\phi_1}$, specifically $\sigma_{\phi_1} = \{1, 2, \dots, 20 \}$, to account for the choice of $a_2=0.1a$ in the previous simulation. Also in (b) $A_{\max}$ is determined up to an interval of length $10^{-4}$. The widths of the fixed kernels are set to $a_2 = 0.1$ (a) and $a=1$ (b), respectively. The other parameter values used both simulations are $B=0.45$, $\ell=\ell_1=0.5$, $m=5$.}
\end{figure}

%%%%%%%%%%%%%%%%%% Change in symmetric water kernel %%%%%%%%%%%%%%%%%%%%%%%%%%%%%%%%%%%%%%%%%%%%%
Next, we perform a similar analysis for the symmetric water dispersal kernel $\phi_1$. To be consistent with the setting $a_2 = 0.1a$ in the simulation for the simultaneous change of the kernel functions, we consider a larger range of $\sigma_{\phi_1}$ for this simulation. The results (Figure \ref{fig:difference equations: simulations: Amax water kernels symmetric flat}) show that for narrow kernels, $A_{\max}$ is close to its minimum $A=2B$, i.e. the rainfall interval supporting pattern formation is very small. In particular, as $\sigma_{\phi_1}\rightarrow 0$, $A_{\max} \rightarrow 2B$ and no patterns can occur. For the Laplace kernel, this can also be shown using linear stability analysis. If $\sigma_{\phi_1} =  0$, then $\widehat{\phi_1} \equiv 1$ and thus the Jacobian \eqref{eq:difference: linear stability Jacobian} becomes 
\begin{align*}
J = \mattwo{C\widehat{\phi}(k)\alpha}{C\widehat{\phi}(k)\beta}{D\gamma}{D\delta}.
\end{align*}
Further, the stability condition is
\begin{align*}
k^2> \frac{BCa^2\left(A^2-A\sqrt{A^2-4B^2} -4B^2 \right)}{A^2-A\sqrt{A^2-4B^2}}.
\end{align*}
The right hand side is negative and thus the steady state is always stable to spatially heterogeneous perturbations. An increase of the kernel width then causes an increase in the rainfall threshold $A_{\max}$, where those kernels that decay exponentially at infinity, yield a larger increase than those decaying algebraically.

%%%%%%%%%%% Interpretation of simultaneous kernel changes using results of single changes %%%%%%%%%%%%%%%%
The results above confirm that the plant dispersal kernel $\phi$ and the water dispersal kernel $\phi_1$ have opposite effects on the rainfall threshold $A_{\max}$. While an increase in the width of the plant dispersal inhibits the onset of patterns, an increase in the standard deviation of the water dispersal kernel increases the tendency to form patterns. This explains the nearly constant value of $A_{\max}$ in the simulations in which both kernel functions are varied simultaneously. Consequently, these results suggest that it is the ratio of plant dispersal to water dispersal, i.e. the ratio $\sigma_{\phi}/\sigma_{\phi_1}$ that controls the tendency to form patterns. An increase in the ratio inhibits the onset of patterns, while a decrease has the opposite effect.

\section{Discussion}\label{sec: difference: Discussion}

The deliberately basic description of the plant-water dynamics in semi-arid environments by the Klausmeier model provides a rich framework for model extensions to address a range of different features of dryland ecosystems and their effects on vegetation patterns. Extensions include cross advection due to decreased surface water run-off resulting from an increase in infiltration in biomass patches \cite{Wang2019}; terrain curvature \cite{Gandhi2018}; nonlocal dispersal of seeds \cite{Eigentler2018nonlocalKlausmeier, Bennett2019}; secondary seed dispersal due to overland water flow \cite{Consolo2019a}; nonlocal grazing effects \cite{Siero2019, Siero2018}; explicit modelling of a population of grazers \cite{Fernandez-Oto2019a}; local competition between plants \cite{Wang2018a}; the inclusion of autotoxicity \cite{Marasco2014}; multispecies plant communities \cite{Eigentler2019Multispecies, Ursino2016, Callegaro2018} and seasonality and intermittency in precipitation \cite{Ursino2006, Eigentler2018impulsiveflat}. One aspect that has not yet been considered in this context is the seasonal separation of plant growth and seed dispersal. In this paper we have considered the synchronised and seasonal occurrence of nonlocal seed dispersal through a system of integrodifference equations based on the Klausmeier reaction-advection-diffusion system.

While an integrodifference system cannot explicitly quantify the temporal separation of seed dispersal occurrences,  the model's derivation and an associated convergence result (Proposition \ref{prop: Difference: limit case}) yield a parameter setting in which the length of the growth phase between dispersal stages can be accounted for. However, the main result of the linear stability analysis of the integrodifference model in this paper (Proposition \ref{prop: Difference: LinStab: diffusion threshold flat ground}) shows that conditions for pattern onset in the integrodifference model \eqref{eq: Klausmeier integrodifference} are independent of the temporal separation of seed dispersal from plant growth. Moreover, due to the model's derivation form the Klausmeier model \eqref{eq: Intro Klausmeier local}, the pattern onset conditions for both models are equivalent.

Some semi-arid environments in which vegetation patterning is a common phenomenon are characterised by large temporal and in particular seasonal fluctuations in their environmental conditions \cite{Noy-Meir1973, Chesson2004}. For example, observed patterns in Spain, Israel and North America are all located in Mediterranean climate zones \cite{Peel2007}, in which precipitation mainly occurs during winter, while during the summer months little or no rainfall occurs. {\color{changes}By} contrast, most mathematical models describing these ecosystems employ partial differential equations. While PDE models provide a rich framework for mathematical model analysis, their use is based on the simplifying assumption that all processes occur continuously in time. The results presented in this paper emphasise the importance and significance of results obtained from such models. In the context of seed dispersal, the biologically more realistic temporal separation of plant growth and seed dispersal has no effect on the conditions for pattern onset to occur. We thus conclude that the results obtained for the Klausmeier PDE model are robust to changes in the temporal properties of seed dispersal processes and that the assumption of continuous seed dispersal provides a sufficiently accurate description.

The parameter setting used to establish a connection between the Klausmeier model \eqref{eq: Intro Klausmeier local} and the integrodifference model \eqref{eq: Klausmeier integrodifference} couples the scale parameter $a$ of the seed dispersal kernel to other model parameters. If, however, a more general parameter setting is considered, then the effects of changes to the average seed dispersal distance and the shape of the seed dispersal kernel can be analysed numerically. Our results, which are in full agreement with an earlier investigation of the nonlocal Klausmeier model \eqref{eq: Intro Klausmeier nonlocal} \cite{Eigentler2018nonlocalKlausmeier}, show that seed dispersal over longer distances inhibits the formation of patterns (Figure \ref{fig:difference equations: simulations: Amax plant kernels flat}). Indeed, the threshold $A_{\max}$ on the rainfall parameter above which no pattern onset occurs, tends to $A_{\min}$, the minimum rainfall level required for the existence of a nontrival spatially uniform equilibrium, as dispersal distances become sufficiently large. Nevertheless, many plant species in semi-arid ecosystems have developed antitelechoric mechanisms which inhibit long range seed dispersal \cite{Ellner1981, RheedevanOudtshoorn2013}. While in the context of this paper this may appear as an evolutionary disadvantage, the development of narrow seed dispersal kernels is a side effect of other adaptations such as the development of seed containers as a protection to predation \cite{Ellner1981}. This suggests the existence of an evolutionary trade-off between seed dispersal distance and plant mortality. A numerical study of the threshold $A_{\max}$ in the $\sigma_\phi$-$B$ parameter plane (Figure \ref{fig: Difference: Discussion: Amax in sigma B plane}) gives some useful insight into this. The trade-off would restrict parameters to some increasing curve in the $\sigma_\phi$-$B$ parameter plane. Depending on the exact functional form of such a trade-off, a decrease in the seed dispersal distance $\sigma_\phi$ may cause a reduction in the precipitation threshold $A_{\max}$, if the trade-off implies a sufficiently large simultaneous decrease in the plant mortality rate $B$. A lower $A_{\max}$ value corresponds to an inhibition of pattern onset. We thus conclude that our model can capture the evolutionary advantage associated with the development of protective antitelechoric mechanisms if the trade-off between seed dispersal distance $\sigma_\phi$ and plant mortality $B$ is chosen appropriately, but emphasise that we are not aware of any data that provides quantitative information on the exact form of this trade-off.

%    If the reduction in plant mortality associated with the development of protective antitelechoric mechanisms is sufficiently large, then this trade-off causes a reduction in the threshold $A_{\max}$ and thus allows for the formation of a spatially uniform vegetation cover for lower precipitation volumes.

\begin{figure}
	\centering
	\includegraphics[width=0.6\textwidth]{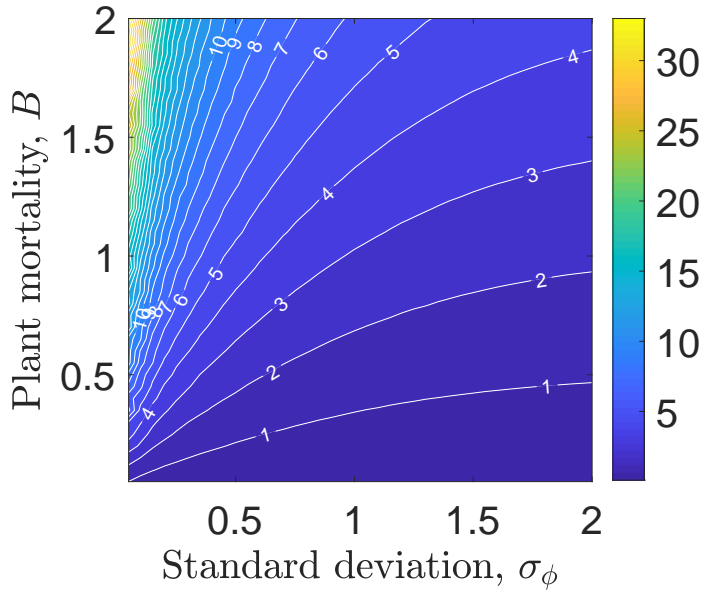}
	\caption{The threshold $A_{\max}$ in the $\sigma_\phi$-$B$ parameter plane. The numerically obtained rainfall threshold $A_{\max}$ is shown in the $\sigma_\phi$-$B$ parameter plane as a contour plot, where $\sigma_\phi$ denotes the standard deviation of the plant dispersal kernel $\phi$. It was obtained on the spatial grid $\{0.05,0.1,\dots,1.95,2\}\times\{0.05,0.1,\dots,1.95,2\}$ for the Laplace kernel \eqref{eq:difference equations: simulations: Laplacian} and $a_2 = 0.1$, $\ell = \ell_1 = 0.5$, $m=5$. We speculate that there may be an evolutionary trade-off between dispersal distance and resistance to predation, which would restrict parameters to an increasing curve in the $\sigma_\phi$-$B$ plane.}\label{fig: Difference: Discussion: Amax in sigma B plane}
\end{figure}

Our results further indicate that the shape of the seed dispersal kernel, and in particular its decay at infinity, has a significant effect on the onset of patterns. Fat-tailed kernels, for example, that account for a higher proportion of long-range dispersal events, yield a lower level of $A_{\max}$ than kernel functions with exponential decay at infinity for a sufficiently small fixed standard deviation. This highlights the importance of obtaining knowledge of seed dispersal behaviour of plant species, a property that depends on both species and the environment (e.g. seed dispersal agent) \cite{Bullock2017}. 

In our integrodifference model \eqref{eq: Klausmeier integrodifference}, we model the redistribution of water through a convolution similar to the modelling of the seed dispersal process. This nonlocal description can account for overland water flow from bare ground to biomass patches across larger distances during precipitation events. It does, however, rely on the assumption that the soil's properties enhance overland water flow in regions of low biomass. Some ecosystems in semi-arid environments are characterised by soil conditions and soil types (e.g. sand) for which this assumption is invalid \cite{Valentin1999}. The formation of vegetation patterns under such environmental conditions can, however, be explained by other mechanisms, such as laterally extended root networks \cite{Meron2018}. The integrodifference model presented in this paper is based on the assumption that little or no water infiltration occurs in regions of low biomass, and that the overland water flow towards regions of high biomass induced by this soil property is the main mechanism causing the self-organisation into patterns. In this context, our results show that water redistribution over longer distances yields the onset of patterns at higher precipitation levels (Figure \ref{fig:difference equations: simulations: Amax water kernels symmetric flat}). This is due to the enhancement of the pattern-inducing vegetation-infiltration feedback. Existing biomass patches deplete the water density locally, while regions of bare soil retain a higher water levels. Hence, any redistribution of water has a homogenising effect on the water density which yields to a redistribution of the limiting resource from areas of low biomass to areas of high biomass. An increase in the spatial range of the water redistribution kernel thus strengthens the pattern-inducing feedback and causes pattern onset under larger precipitation volumes.

The work in this paper shows that the description of seed dispersal as a synchronised event during a phase in which no plant growth occurs does not affect the condition for pattern onset compared to the continuous description of seed dispersal in the Klausmeier model \eqref{eq: Intro Klausmeier local}. The stability of spatial patterns is equally important. A natural area of future work would therefore be an analysis of pattern stability in the integrodifference model \eqref{eq: Klausmeier integrodifference} comparing results with stability results for the local Klausmeier model \cite{Sherratt2007} and the nonlocal Klausmeier model \cite{Bennett2019}. For PDE models, the stability of spatial patterns can be determined through a calculation of their spectra. For this, a method based on numerical continuation has been developed by Rademacher et al. \cite{Rademacher2007} (for details see \cite{Rademacher2007, Sherratt2012}). For integrodifference equations, however, we are not aware of any methods that allow the determination of the stability of a patterned solution. 

The integrodifference model \eqref{eq: Klausmeier integrodifference} not only splits the dynamics of the plant population into separate growth and dispersal stages, but also that of the water dynamics into a water consumption stage and a water redistribution stage. In the model, spatial redistribution of water is synchronised with seed dispersal. This can provide an adequate description for species such as \textit{Mesembryanthemum crystallinum} and \textit{Mesembryanthemum nodiflorum}, which synchronise their seed dispersal with the beginning of the rain season \cite{Navarro2009}, but cannot provide a description of seed dispersal during drought periods or of water flow at any other time during the rain season. While a description of the water flow dynamics during precipitation events in the context of a vegetation model has been proposed by Siteur et al. \cite{Siteur2014a}, the exact dynamics on flat ground are the subject of ongoing research (e.g. \cite{Rossi2017, Thompson2011, Wang2015}) and could be utilised in a future extension of the integrodifference model \eqref{eq: Klausmeier integrodifference}. The description of the water density as one single variable would, however, be prohibitive for such an approach. Instead a distinction between surface water and soil moisture, such as in the Rietkerk et al. model \cite{HilleRisLambers2001, Rietkerk2002} or the Gilad et al. model \cite{Gilad2004}, needs to be made to distinguish between surface water flow processes and water uptake processes that take place in the soil.

{\color{changes}The integrodifference model \eqref{eq: Klausmeier integrodifference} and its analysis presented in this paper is restricted to a one-dimensional space domain, motivated by the original formulation of the Klaumeier model and its mathematical accessibility \cite{Klausmeier1999}. However, the consideration of a second space dimension is expected to give more insights into the ecohydroglogical dynamics, in particular on pattern existence and stability. For example, in related PDE models on two-dimensional space domains, different types of patterned solutions exist (gap patterns, labyrinth patterns, striped patterns and gap patterns) and phase transitions along the precipitation gradient can be investigated \cite{Meron2012}. Moreover, even on sloped terrain, the impact of the consideration of a two-dimensional domain is significant, as the analysis on a one-dimensional domain may overestimate the size of the patterns' stability regions \cite{Siero2015}. The analysis of the integrodifference model \eqref{eq: Klausmeier integrodifference} on a two-dimensional domain presents a considerable challenge, in particular if one would want to obtain a wavenumber-independent results analogous to Prop. \ref{prop: Difference: LinStab: diffusion threshold flat ground}. Nevertheless, this would be a natural area of potential future work to further disentangle the complex ecosystem dynamics.}

Finally, we remark that the integrodifference model \eqref{eq: Klausmeier integrodifference} describes the discrete structure of plant growth mechanisms caused by the seasonality in precipitation. However, it does not capture the dynamics specific to drought periods between rainfall events and is thus only able to provide an insight into effects of accumulated rainfall volume rather than the temporal separation of precipitation seasons. In separate work, we account for a combination of rainfall, plant growth and seed dispersal pulses with the continuous nature of plant loss and water evaporation and drainage, using an impulsive model \cite{Eigentler2018impulsiveflat}. Such models combine partial differential equations with integrodifference equations (see for example \cite{Wang2018} for an impulsive model in the context of predator-prey dynamics with synchronised predator reproduction). The impulsive model has its own limitations as it can only take into account a periodic separation of precipitation events, but not any seasonal patterns. A potential area of future work therefore consists of a combination of these approaches to describe both the seasonal and intermittent nature of rainfall in semi-arid climate zones.

\section*{Acknowledgements}
Lukas Eigentler was supported by The Maxwell Institute Graduate School in Analysis and its Applications, a Centre for Doctoral Training funded by the UK Engineering and Physical Sciences Research Council (grant EP/L016508/01), the Scottish Funding Council, Heriot-Watt University and the University of Edinburgh.

\bibliography{bibliography.bib}

\end{document}